\useunder{\uline}{\ul}{}
\setlist{nosep}
\newtheorem{theorem}{Theorem}[section]
\newtheorem{definition}[theorem]{Definition}
\newtheorem{proposition}[theorem]{Proposition}
\newtheorem{corollary}[theorem]{Corollary}
\newtheorem{lemma}[theorem]{Lemma}
\newtheorem{remark}[theorem]{Remark}
\newcolumntype{R}{>{\raggedleft\arraybackslash}X}
\renewcommand{\l}{\left}
\renewcommand{\r}{\right}
\renewcommand{\Pr}{\mathbb{P}}
\newcommand{\R}{\mathbb{R}}
\newcommand{\Q}{\mathbb{Q}}
\newcommand{\E}{\mathbb{E}}
\renewcommand{\tilde}{\widetilde}
\newcommand{\q}{\quad}
\newcommand{\One}{\mathbf{1}}
\definecolor{darkblue}{rgb}{0.1,0.1,0.9}
\definecolor{darkred}{rgb}{0.9,0.1,0.1}
\newcommand{\Rmnum}[1]{\expandafter\@slowromancap\romannumeral #1@}
\title{\bf Pareto-Optimal Peer-to-Peer Risk Sharing\vspace{0.2cm}\\with Robust Distortion Risk Measures}
\author[$\scriptsize{\text{\CrossOpenShadow}}$]{Mario Ghossoub}
\affil[$\scriptsize{\text{\CrossOpenShadow}}$]{Department of Statistics and Actuarial Science, University of Waterloo.\protect\\ Email: mario.ghossoub@uwaterloo.ca
\thanks{Mario Ghossoub acknowledges financial support from the Natural Sciences and Engineering Research Council of Canada (NSERC Grant No.\ 2024-03744). Michael B.\ Zhu acknowledges financial support from the Society of Actuaries through the Hickman Scholars Program.}\vspace{0.35cm}}
\author[$\star$]{Michael B. Zhu}
\affil[$\star$]{Department of Statistics and Actuarial Science, University of Waterloo.\protect\\ Email: mbzhu@uwaterloo.ca\vspace{0.35cm}}
\author[$\ddagger$]{Wing Fung Chong}
\affil[$\ddagger$]{Maxwell Institute for Mathematical Sciences and Department of Actuarial Mathematics and Statistics, Heriot-Watt University. Email: alfred.chong@hw.ac.uk\vspace{0cm}}
\date{\today}
\begin{document}
\sloppy

\maketitle

\begin{abstract}
We study Pareto optimality in a decentralized peer-to-peer risk-sharing market where agents' preferences are represented by robust distortion risk measures that are not necessarily convex. We obtain a characterization of Pareto-optimal allocations of the aggregate risk in the market, and we show that the shape of the allocations depends primarily on each agent's assessment of the tail of the aggregate risk. We quantify the latter via an index of probabilistic risk aversion, and we illustrate our results using concrete examples of popular families of distortion functions. As an application of our results, we revisit the market for flood risk insurance in the United States. We present the decentralized risk sharing arrangement as an alternative to the current centralized market structure, and we characterize the optimal allocations in a numerical study with historical flood data. We conclude with an in-depth discussion of the advantages and disadvantages of a decentralized insurance scheme in this setting.
\end{abstract}

\newpage
\section{Introduction}

The idea that individuals act to maximize their heterogeneous preferences in a context of uncertainty and scarcity of resources is the very reason for the existence of exchange markets. A practically relevant example of such markets for actuaries and insurance professionals are the markets for sharing risks between agents with differing preferences, beliefs, and levels of risk aversion. Broadly speaking, markets for exchanging risks 
can be classified into two categories: centralized risk-sharing markets and decentralized risk-sharing markets. In centralized risk-sharing markets, a central agent (or collection thereof) acts as the supplier of insurance. This is the traditional model of insurance, of which the basic theoretical component is the notion of an insurance contract (a premium and an indemnity function). Decentralized risk-sharing markets consist of a pool of agents who wish to directly insure each other, without recourse to a central insurance provider. The agents need to determine \textit{a priori} a way to allocate the aggregate risk among the pool. Examples of these markets include peer-to-peer (P2P) insurance arrangements, which have gained significant popularity in recent years.

The study of Pareto-optimal insurance contracting within centralized insurance markets has its roots in the two-agent market consisting of one policyholder and one insurer. In early foundational work, Arrow \cite{arrow1971essays} and Borch \cite{borch1960attempt} show that at a Pareto optimum, the indemnity function in a two-agent market is full coverage above a constant deductible, provided that the insurer is risk-neutral and the policyholder's preference admits a risk-averse expected-utility representation. Numerous extensions of the classical model were subsequently proposed, mostly aiming at introducing more realistic models of policyholder behavior into the theory 
(see Gollier \cite{Gollier2013} or Schlesinger \cite{Schlesinger2000} for an overview). For instance, the effect of belief heterogeneity between the agents was examined by Ghossoub \cite{ghossoub2017arrow,Ghossoub2019a,Ghossoub2019b}, Boonen and Ghossoub \cite{BoonenGhossoub2019b,BoonenGhossoub2020a}, and Ghossoub et al.\ \cite{GJR2023}. In the literature on ambiguity in optimal insurance design, the majority of the literature considered ambiguity on the side of the insured, as in Bernard et al.\ \cite{Bernardetal2015} and Xu et al.\ \cite{Xuetal2018}. Several extensions were proposed in Amarante et al.\ \cite{AGP2015}, Ghossoub \cite{Ghossoub2019c,Ghossoub2019b}, and Birghila et al.\ \cite{BGB2023}, for instance. Insurance market models in which the insurer and/or policyholders have preferences represented by risk measures were also examined, such as Value-at-Risk (VaR) and Expected Shortfall (ES) (e.g., Cai et al.\ \cite{cai2008varcte}, Chi and Tang \cite{chi2013optimal}, Cheung et al.\ \cite{cheung_et_al_2015_convex_ordering}, and Asimit et al.\ \cite{asimit_et_al_2021_multiple_environments}), and distortion risk measures (DRMs) (e.g., Cui et al.\ \cite{cui2013optimal}, Assa \cite{assa2015optimal}, and Zhuang et al.\ \cite{Zhuangetal2016}). Extensions of these results to a centralized market with multiple suppliers of insurance have been examined by Boonen et al.\ \cite{Boonenetal2016}, Asimit and Boonen \cite{asimit2018gametheory}, Boonen and Ghossoub
\cite{BoonenGhossoub2019b,BoonenGhossoub2020c}, and Zhu et al.\ \cite{zhuetal2023spne}, for instance.

However, the focus on identifying Pareto-optimal allocations often neglects the impact of market forces in a given insurance market. In particular, the asymmetry of bargaining power between the insurer and the policyholder can invalidate certain Pareto-optimal allocations as likely outcomes of this market. Furthermore, the study of Pareto efficiency takes prices (or premia) in the market as given, and it does not address the question of how premia are determined at a market equilibrium. A popular equilibrium concept that takes into account the aforementioned asymmetry between the insurer and the policyholder is the Stackelberg equilibrium (or Bowley optimum), which applies to a monopolistic insurance market where the insurer and the policyholder move sequentially in an economic game, with the insurer as the leader and the policyholder as the follower. Specifically, the insurer is able to set the prices of insurance before the policyholder has a chance to purchase insurance. The contracts that are expected to emerge in markets with this leader-follower structure are Stackelberg equilibria, which were first examined in an insurance context by Chan and Gerber \cite{chan1985reinsurer} for agents with exponential utility functions. These results were further extended to the case of DRMs by Cheung et al.\ \cite{cheung2019bowley}. However, of particular economic relevance is the relationship between these Stackelberg equilibria and Pareto optimality of allocations. To this end, Boonen and Ghossoub \cite{boonen2022bowley} show that when the insurer uses a linear pricing rule and agents' preferences are represented by DRMs, Stackelberg equilibria represent only a subset of Pareto optimal allocations. Moreover, in every Stackelberg equilibrium, the insurer's first-mover advantage allows it to charge prices so high that the policyholder is left with no incentive to purchase insurance. In other words, Stackelberg equlibria do not induce a welfare gain to the policyholder. Recent results from Ghossoub and Zhu \cite{GhossoubZhu2024stackelberg} show that this situation occurs even in markets with multiple policyholders, and Zhu et al.\ \cite{zhuetal2023spne} show that introducing competition on the supply side of the insurance market alleviates this problem.

As an alternative to centralized markets of insurance, decentralized insurance markets offer the agents the possibility of sharing risks directly among themselves, thereby avoiding interaction with traditional insurance providers. This would circumvent the potential no-welfare-gain equilibria described above, as there is no longer any danger of an insurance provider exploiting an advantage in bargaining power. A primary concern of decentralized risk-sharing markets is the Pareto-efficiency of the allocations of the aggregate risk. The seminal work of Borch \cite{borch1962} and Wilson \cite{Wilson1968} showed that when agents have risk-averse expected-utility (EU) preferences, each individual's allocation of risk is a nondecreasing deterministic function of the aggregate risk in the market at a Pareto optimum. This, in turn, leads to a complete characterization of efficient allocations (through the so-called Borch rule) and shows that Pareto optima (PO) are in fact comonotone. The subsequent literature has then examined several extensions of the classical model beyond EU. Notably, the effect of ambiguity-sensitive agents on optimal allocations has attracted significant attention in the literature. For instance, Dana \cite{Dana2002, Dana2004} and De Castro and Chateauneuf \cite{decastro2011} consider economies with Maxmin Expected Utility  mutiple-prior preferences \`a la Gilboa and Schmeidler \cite{GilboaSchmeidler1989maxmin}. Chateauneuf et al.\ \cite{Chateauneufetal2000}, Dana \cite{Dana2004}, De Castro and Chateauneuf \cite{decastro2011}, and Beissner and Werner \cite{BeissnerWerner2023} study economies with non-probabilistic uncertainty as in the Choquet-Expected Utility model of Schmeidler \cite{schmeidler89}. The more general class of variational preferences of Maccheroni et al.\  \cite{Maccheronietal2006}) was examined by Dana and LeVan \cite{DanaLeVan2010} and Ravanelli and Svindland \cite{RavanelliSvindland2014}, for instance. In the context of risk management, Pareto-optimal allocations have been studied when preferences satisfy the properties of translation invariance, risk aversion, convexity, coherence, or a combination thereof: we refer to Jouini et al.\ \cite{JouiniSchachermayerTouzi2008}, Filipovic and Svindland \cite{filipovic2008optimal}, Mao and Wang \cite{mao2020risk}, Ghossoub and Zhu \cite{ghossoubzhu2024}, and references therein for an overview of the topic. Of particular recent interest is the risk-sharing problem for agents with quantile-based risk measures that are not necessarily convex, as studied by Embrechts et al.\ \cite{embrechts2018quantile,embrechts2020quantile}, Liu \cite{liu2020weighted}. In contrast with the literature on risk-averse preferences, optimal allocations can be counter-monotone rather than comonotone, as shown in Lauzier et al.\ \cite{lauzier2024negatively} and Ghossoub et al.\ \cite{ghossoubetal2024counter}.

In practice, decentralized markets of insurance are often found in the form of risk-sharing pooling arrangements, in which every agent pays a contribution fee to a pool in return for partial coverage of their monetary risk. For more details on real-world examples of peer-to-peer risk sharing, we refer to Abdikerimova and Feng \cite{abdikerimova2022peer}. Decentralized insurance is also prevalent in markets for catastrophe risk: we refer to Bollman and Wang \cite{bollmann2019international} for an overview, as well as Feng et al.\ \cite{feng2023peer} for the flood risk market with proportional insurance. The allocation of the \emph{ex post} risk to each agent is generally performed according to a predetermined risk-sharing rule. Examples of these rules previously studied in the literature include the conditional mean risk-sharing rule (e.g., Denuit and Dhaene \cite{denuit2012convex}), as well as rules based on actuarial fairness and Pareto optimality as in Feng et al.\ \cite{feng2023peer}. Our focus in this paper is on the Pareto optimality of allocations. For more on desirable properties of risk-sharing rules and recent developments, we refer to Denuit et al.\ \cite{denuit2022risk}.

In this paper, we study Pareto optimality in a decentralized peer-to-peer risk-sharing market where agents' preferences are represented by robust distortion risk measures that are not necessarily convex. In our setting, robustness is with respect to the distortion function used, not with respect to the probability measure on the space. This is in contrast to Bernard et al.\ \cite{bernard2023robust}, but it is in line with Wang and Xu \cite{wang2023preference}, who argue that these risk measures can be motivated by the ambiguity present in each agent's preference. This type of robustness also emerges naturally in the context of coherent risk measures (e.g., Dana \cite{dana2005representation}). Moreover, we allow for heterogeneity both in the baseline probability measures used by the agents and in the sets of distortion functions used by the agents. The set of feasible allocations is assumed to consist of comonotone allocations, that is, allocations that are $1$-Lipschitz functions of the aggregate risk in the market. This is related to the so-called \textit{no-sabotage} condition of Carlier and Dana \cite{CarlierDana2003b,CarlierDana2005a} typically assumed in centralized insurance markets, which guarantees that no agent has an incentive to misreport their actual realized loss \textit{ex post}.

Our main result (Theorem \ref{thm:po_drm_characterization}) provides a characterization of Pareto-optimal allocations of the aggregate risk in the market. In particular, we show that the shape of the allocations depends primarily on each agent's assessment of the tail of the aggregate risk. We quantify the latter via an index of probabilistic risk aversion. In a special case of our result where the set of distortion functions for each agent is in fact a singleton (Corollary \ref{cor:characterization_no_robust}), we recover the explicit form of optimal allocations under distortion risk measures previously attained by Liu \cite{liu2020weighted}. We then provide several illustrations of our results using concrete examples of popular families of distortion functions.

As an application of our results, we revisit the market for flood risk insurance in the United States through the lens of peer-to-peer insurance. We present the decentralized risk sharing arrangement as an alternative to the current centralized market structure, and we characterize the optimal allocations in a numerical study with historical flood data. Specifically, we revisit the numerical study of Boonen et al.\ \cite{BCG2024JRI}, which examines a centralized market with the federal government acting as the sole provider of insurance. However, as argued by Ghossoub and Zhu \cite{GhossoubZhu2024stackelberg}, this centralized structure provides an incentive for the central authority to charge extremely high prices at the expense of the policyholders. We therefore present the decentralized risk-sharing scheme as an alternative, and we characterize the optimal allocations thereof using historical data. We find that while the decentralized risk-sharing market avoids Stackelberg equilibria, the trade-off is that the average welfare gain is lower than that of the centralized market.

The remainder of this paper is structured as follows. Section \ref{sec:formulation} formulates the optimal allocation problem for decentralized markets. Section \ref{sec:explicit} provides the characterization of Pareto optima in this setting, as well as some general results for common families of distortion risk measures. We apply our results to the market for flood risk insurance in the United States in Section \ref{sec:flood_risk}, in which we also discuss the advantages and disadvantages of a decentralized market scheme. Section \ref{sec:conclusion} concludes.

\section{Problem Formulation}
\label{sec:formulation}

Let $\mathcal{X}$ be the set of bounded real-valued measurable functions on a given measurable space $\left(\Omega,\mathcal{F}\right)$, and let $\mathcal{X}_+$ denote its non-negative elements. There are $n\in\mathbb{N}$ agents wishing to share their endowed risks among themselves without any central authority involvements. For each $i \in \mathcal{N} :=\{1,\ldots,n\}$, let $X_i \in \mathcal{X}_+$ denote the non-negative loss of the $i$-th agent. We consider a one-period, risk-sharing economy, where all risks are realized at the end of the period. The risk-sharing mechanism in this market is as follows.

For each $i \in \mathcal{N}$, agent $i$ pays \textit{ex ante} the contribution amount $\pi_i\in\mathbb{R}$ to a pool. At the end of the period, the aggregate loss $S:=\sum_{i=1}^{n}X_i$ is covered by the aggregate amount $\sum_{i=1}^{n}\pi_i$ in the pool. The residual aggregate loss to be shared among the agents is thus given by $\tilde S := \sum_{i=1}^{n}X_i-\sum_{i=1}^{n}\pi_i$. If $\tilde S \geq 0$, the pool is subject to a residual aggregate loss; whereas if $\tilde S \leq 0$, the pool has a monetary surplus to be shared among the agents at the end of the period. Let $Y_i$ be the shared loss by the $i$-th agent, $i\in\mathcal{N}$, from the residual aggregate loss $\tilde S$. For each $i \in \mathcal{N}$, the end-of-period, post-transfer risk exposure of agent $i$ is thus given by $Y_i+\pi_i$.

\medskip

\begin{remark}
Note that this formulation slightly generalizes the usual formulation in the literature by incorporating deterministic contributions from the agents to the pool at the beginning of the period. This can be compared to the usual formulation by defining $Z_i:=Y_i+\pi_i$ to be the resulting allocation to agent $i$. Note that for each $i\in\mathcal{N}$, the allocation $Z_i$ is almost surely bounded by below.
\end{remark}

We assume that the market is a comonotone risk-sharing market, as previously studied by Boonen et al.\ \cite{boonen2021competitive}. This is a market in which the admissible contracts are only those that are comonotone with the aggregate initial risk. Under this restriction, the contracts all satisfy the \textit{no-sabotage} condition of Carlier and Dana \cite{CarlierDana2003b,CarlierDana2005a}, which guarantees that no agent has an incentive to misreport their actual realized loss. This is one possible justification for the existence of such a market, since preventing \emph{ex-post} moral hazard is in the best interest of every agent. An alternative justification arises from the situation where each agent's preference is monotone with respect to the convex order. In this case, it is well-known that Pareto optima are comonotone (e.g., Ghossoub and Zhu \cite{ghossoubzhu2024}).

The set of \textit{ex ante} admissible decision variables is therefore given by:
\begin{align*}
\mathcal{A}:=\left\{\left(\left\{Y_i\right\}_{i=1}^{n},\left\{\pi_i\right\}_{i=1}^{n}\right)\in\mathcal{X}_+^n\times\mathbb{R}^{n}:\sum_{i=1}^{n}(Y_i+\pi_i)=\sum_{i=1}^{n}X_i,\,\{Y_i\}_{i=1}^n\mbox{ is comonotone}\footnote{A random vector $\{Z_i\}_{i=1}^n$ is said to be \emph{comonotone} if $\left[Z_i(\omega_1)-Z_j(\omega_2)\right]\left[Z_i(\omega_1)-Z_j(\omega_2)\right]\ge0$, for all $\omega_1,\omega_2\in \Omega$ and $i,j\in\{1,\dots,n\}$.}\right\}.
\end{align*}

\medskip

\begin{remark}
\label{rmk:comonotone_rep}
Note that $\{Y_i+\pi_i\}_{i=1}^n$ is comonotone if and only if $\{Y_i\}_{i=1}^n$ is comonotone. Recall that $\sum_{i=1}^n(Y_i+\pi_i)=\sum_{i=1}^nX_i=S$ is the aggregate risk present in the insurance market. By a standard result (e.g., Denneberg \cite[Proposition 4.5]{denneberg1994non}), if $\{Y_i+\pi_i\}_{i=1}^n$ is comonotone, then there exist increasing 1-Lipschitz functions $f_i$ such that $Y_i+\pi_i=f_i(S)$.
\end{remark}

The preferences of each agent $i$, for $i\in\mathcal{N}$, are represented by given risk measures $\rho_i$. Consequently, each agent $i$ measures their pre-transfer risk exposure by $\rho_i\left(X_i\right)$, and their post-transfer risk exposure by $\rho_i\left(Y_i+\pi_i\right)$. We recall some properties of risk measures below.

\begin{definition}
    A risk measure $\varrho:\mathcal{X}\to\R$ is said to be:
    \begin{itemize}
        \item {Monotone} if $\varrho(Z_1)\le\varrho(Z_2)$, for all $Z_1,Z_2\in\mathcal{X}$ such that $Z_1\le Z_2$.
        \medskip
        \item {Translation invariant} if $\varrho(Z+c)=\varrho(Z)+c$, for all $Z\in\mathcal{X}$ and $c\in\R$.
        \medskip
        \item {Law-invariant} if for all $Z_1,Z_2\in\mathcal{X}$ with the same distribution under $\Pr$, we have $\varrho(Z_1)=\varrho(Z_2)$.
    \end{itemize}
\end{definition}

\noindent We will assume throughout that all risk measures are monotone.

\medskip

\begin{definition}
\label{defn:ir_po}
A contract $\left(\left\{Y^*_i\right\}_{i=1}^{n},\left\{\pi^*_i\right\}_{i=1}^{n}\right)\in\mathcal{A}$ is said to be:
\begin{itemize}
\item {\bf Individually Rational} (IR) if it incentivizes the parties to participate in the market, that is, 
\begin{equation*}
\rho_i\left(Y^*_i+\pi^*_i\right)\leq\rho_i\left(X_i\right),\quad \forall \, i \in \mathcal{N}.
\end{equation*}

\medskip

\item {\bf Pareto Optimal} (PO) if it is IR and there does not exist any other IR contract $\left(\left\{Y_i\right\}_{i=1}^{n},\left\{\pi_i\right\}_{i=1}^{n}\right)$ such that
\begin{equation*}
\rho_i\left(Y_i+\pi_i\right)\leq\rho_i\left(Y^*_i+\pi^*_i\right),\quad \forall \, i \in \mathcal{N},
\end{equation*}
with at least one strict inequality. We denote the set of all PO allocations by $\mathcal{PO}$.
\end{itemize}
\end{definition}

\subsection{Pareto Optimality in Peer-to-Peer Arrangements}
\label{subsec:pareto}

First, we show that all comonotone allocations are translations of a suitable non-decreasing function of the aggregate endowment. To this end, define the set $\mathcal{G}$ by the following.
\begin{equation*}
\mathcal{G}:=\left\{\left\{g_i\right\}_{i=1}^{n}\,\middle|\,g_i:\mathbb{R}_+\rightarrow\mathbb{R}_+\text{ non-decreasing Borel-measurable},\text{ and }\sum_{i=1}^{n}g_i\left(\cdot\right)=\text{Id}\right\}.
\end{equation*}
Note that if $\{g_i\}_{i=1}^n\in\mathcal{G}$, then each $g_i$ is 1-Lipschitz. Furthermore, since each $g_i$ is non-negative, we have $g_i(0)=0$ for all $i\in\mathcal{N}$.

\begin{lemma}
\label{lem:translation}
    Let $(\{Y_i\}_{i=1}^n,\{\pi_i\}_{i=1}^n)\in\mathcal{A}$. Then there exist functions $\{g_i\}_{i=1}^n\in\mathcal{G}$ and constants $\{c_i\}_{i=1}^n\in\R^n$ such that
    \[Y_i+\pi_i=g_i(S)+c_i\quad\forall i\in\mathcal{N}\,,\]
    and $\sum_{i=1}^nc_i=0$\,.
\end{lemma}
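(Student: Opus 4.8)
The plan is to reduce the statement to a fact about functions on $\mathbb{R}_+$ via the comonotone representation of Remark~\ref{rmk:comonotone_rep}, and then to produce $\{g_i\}$ and $\{c_i\}$ by extending and translating the functions so obtained. Concretely, since $(\{Y_i\}_{i=1}^n,\{\pi_i\}_{i=1}^n)\in\mathcal{A}$, the random vector $\{Y_i+\pi_i\}_{i=1}^n$ is comonotone with $\sum_{i=1}^n(Y_i+\pi_i)=S$, so by Remark~\ref{rmk:comonotone_rep} there are non-decreasing $1$-Lipschitz functions $f_i$ with $Y_i+\pi_i=f_i(S)$ for each $i$. Reading $\sum_{i=1}^n(Y_i+\pi_i)=S$ pointwise gives $\sum_{i=1}^n f_i(s)=s$ for every $s\in\mathrm{Ran}(S)$; moreover $\mathrm{Ran}(S)\subseteq\mathbb{R}_+$ since the $X_i$ are non-negative, and $\alpha:=\inf\mathrm{Ran}(S)$ and $\beta:=\sup\mathrm{Ran}(S)$ are finite because $S$ is bounded.

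The second step is to extend each $f_i$ to a non-decreasing function on all of $\mathbb{R}_+$ so that $\sum_{i=1}^n f_i=\mathrm{Id}$ continues to hold everywhere. This is elementary. First extend the $f_i$ continuously from $\mathrm{Ran}(S)$ to its closure (they stay non-decreasing, and the identity survives by continuity). Then, over each open component $(s_1,s_2)$ of $[\alpha,\beta]\setminus\overline{\mathrm{Ran}(S)}$, replace each $f_i$ by its linear interpolant between $(s_1,f_i(s_1))$ and $(s_2,f_i(s_2))$: the $n$ interpolants are non-decreasing and their sum is the linear interpolant of $\mathrm{Id}$, namely $\mathrm{Id}$. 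Finally, set $f_i(s):=f_i(\alpha)-(\alpha-s)/n$ for $s\in[0,\alpha)$ and $f_i(s):=f_i(\beta)+(s-\beta)/n$ for $s>\beta$; each such prescription is non-decreasing in $s$, agrees with the existing value at the endpoint, and its $n$ copies sum to $s$. The resulting $f_i$ are non-decreasing on $\mathbb{R}_+$ (and still $1$-Lipschitz, though one does not need this, as any family in $\mathcal{G}$ is automatically $1$-Lipschitz) with $\sum_{i=1}^n f_i=\mathrm{Id}$ on $\mathbb{R}_+$.

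The last step is the translation: set $c_i:=f_i(0)$ and $g_i:=f_i-c_i$. Then $g_i\ge0$ because $f_i$ is non-decreasing and $0=\min\mathbb{R}_+$, $g_i$ is non-decreasing (hence Borel-measurable), and $\sum_{i=1}^n g_i=\mathrm{Id}-\sum_{i=1}^n c_i=\mathrm{Id}$ since $\sum_{i=1}^n c_i=\sum_{i=1}^n f_i(0)=\mathrm{Id}(0)=0$; thus $\{g_i\}_{i=1}^n\in\mathcal{G}$ and simultaneously $\sum_{i=1}^n c_i=0$. Since $Y_i+\pi_i=f_i(S)=g_i(S)+c_i$ for each $i$, the lemma follows. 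The only point requiring genuine care is the extension step: one must ensure that extending past $\alpha$ and $\beta$ does not destroy $\sum_i f_i=\mathrm{Id}$, and the equal-slope ($1/n$) extrapolation is exactly what guarantees this regardless of how large or negative $f_i(\alpha)$ and $f_i(\beta)$ may be (the $f_i$ themselves need not be non-negative, since the $\pi_i$ may be negative).
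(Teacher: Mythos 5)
Your proof is correct and follows essentially the same route as the paper's: obtain the $1$-Lipschitz functions $f_i$ from the comonotone representation, set $c_i:=f_i(0)$ and $g_i:=f_i-f_i(0)$, and deduce $\sum_{i=1}^n c_i=0$ by evaluating $\sum_{i=1}^n f_i=\mathrm{Id}$ at $0$. The only difference is that you explicitly extend the $f_i$ from $\mathrm{Ran}(S)$ to all of $\mathbb{R}_+$ so that the identity holds at $0$, whereas the paper simply substitutes $0$ for $S$ --- a step already justified because the cited representation (Denneberg, Proposition 4.5) yields $\sum_{i=1}^n f_i=\mathrm{Id}$ on all of $\mathbb{R}$, so your extension construction is careful but not strictly needed.
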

\begin{proof}
    Let $(\{Y_i\}_{i=1}^n,\{\pi_i\}_{i=1}^n)$ be a comonotone allocation. By Remark \ref{rmk:comonotone_rep}, for each $i\in\mathcal{N}$, there exists an increasing 1-Lipschitz function $f_i:\R\to\R$ such that $f_i(S)=Y_i+\pi_i$. Define the function $g_i$ by
\begin{align*}
    g_i: \R&\to\R\\
    x&\mapsto f_i(x)-f_i(0)\,.
\end{align*}
    Then $g_i(0)=0$, and $g_i$ is non-decreasing and 1-Lipschitz, which implies that $g_i$ is non-negative when its domain is restricted to $\R_+$. Since $f_i(S)$ is an allocation, we have
    \begin{equation}
    \label{eq:allocation_s}
        S=\sum_{i=1}^nf_i(S)=\sum_{i=1}^ng_i(S)+\sum_{i=1}^nf_i(0)
    \end{equation}
    Substituting $0$ for $S$ in \eqref{eq:allocation_s} gives
    \begin{align*}
        0&=\sum_{i=1}^ng_i(0)+\sum_{i=1}^nf_i(0)\\
        0&=\sum_{i=1}^nf_i(0)\,,\\
    \end{align*}
    which implies that
    \[\sum_{i=1}^ng_i(S)=S\,,\]
    so $\{g_i\}_{i=1}^n\in\mathcal{G}$. Substituting this into \eqref{eq:allocation_s} gives $\sum_{i=1}^nf_i(0)=0$. Hence, defining $c_i:=f_i(0)$ implies the result.
\end{proof}

In the case where every agent uses a translation invariant risk measure, we have a characterization of PO in terms of solutions of a minimization problem. Let $\mathcal{S}$ denote the set of all solutions to the problem
\begin{equation}
    \label{eq:weighted_infconv}
    \inf_{(\{Y_i\}_{i=1}^n,\{\pi_i\}_{i=1}^n)\in\mathcal{IR}} \ \sum_{i=1}^{n}\rho_i(Y_i+\pi_i)\,.
\end{equation}

\begin{proposition}
\label{prop:characterization}
If, for each $i\in\mathcal{N}$, $\rho_i$ is translation invariant, then $\mathcal{PO}=\mathcal{S}$.
\end{proposition}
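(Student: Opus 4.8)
The plan is to prove the two set inclusions $\mathcal{S}\subseteq\mathcal{PO}$ and $\mathcal{PO}\subseteq\mathcal{S}$ separately, noting that only the second one actually uses translation invariance. The inclusion $\mathcal{S}\subseteq\mathcal{PO}$ should follow immediately from the definitions. If $\mathcal{S}=\emptyset$ there is nothing to prove; otherwise, take $(\{Y^*_i\}_{i=1}^n,\{\pi^*_i\}_{i=1}^n)\in\mathcal{S}$, which by construction lies in $\mathcal{IR}$ and attains the infimum in \eqref{eq:weighted_infconv}. If it were not Pareto optimal, there would be an IR contract $(\{Y_i\}_{i=1}^n,\{\pi_i\}_{i=1}^n)$ with $\rho_i(Y_i+\pi_i)\le\rho_i(Y^*_i+\pi^*_i)$ for all $i\in\mathcal{N}$ and at least one strict inequality; summing these $n$ inequalities gives $\sum_{i=1}^n\rho_i(Y_i+\pi_i)<\sum_{i=1}^n\rho_i(Y^*_i+\pi^*_i)$, contradicting minimality, since the dominating contract is itself in $\mathcal{IR}$.

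For the reverse inclusion $\mathcal{PO}\subseteq\mathcal{S}$, I would argue by contradiction. Suppose $(\{Y^*_i\}_{i=1}^n,\{\pi^*_i\}_{i=1}^n)\in\mathcal{PO}$ but does not solve \eqref{eq:weighted_infconv}; then there is some $(\{Y_i\}_{i=1}^n,\{\pi_i\}_{i=1}^n)\in\mathcal{IR}$ with $\delta:=\sum_{i=1}^n\rho_i(Y^*_i+\pi^*_i)-\sum_{i=1}^n\rho_i(Y_i+\pi_i)>0$. The idea is to convert this aggregate gain into a gain for \emph{every} agent by redistributing the scalar contributions. Set $d_i:=\rho_i(Y^*_i+\pi^*_i)-\rho_i(Y_i+\pi_i)$, so $\sum_{i=1}^n d_i=\delta$, and define $\pi'_i:=\pi_i+d_i-\delta/n$. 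I would then verify that (i) since $\{Y_i\}_{i=1}^n$ is untouched, comonotonicity and non-negativity are preserved, and $\sum_{i=1}^n(Y_i+\pi'_i)=\sum_{i=1}^n(Y_i+\pi_i)=S$ because $\sum_{i=1}^n(d_i-\delta/n)=0$, so $(\{Y_i\}_{i=1}^n,\{\pi'_i\}_{i=1}^n)\in\mathcal{A}$; and (ii) by translation invariance, $\rho_i(Y_i+\pi'_i)=\rho_i(Y_i+\pi_i)+d_i-\delta/n=\rho_i(Y^*_i+\pi^*_i)-\delta/n$. Since the starred contract is IR, $\rho_i(Y_i+\pi'_i)<\rho_i(Y^*_i+\pi^*_i)\le\rho_i(X_i)$ for every $i$, so the new contract is IR and strictly Pareto-dominates $(\{Y^*_i\}_{i=1}^n,\{\pi^*_i\}_{i=1}^n)$, contradicting Pareto optimality.

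The main obstacle, and the only place translation invariance is genuinely needed, is step (ii) above: turning a contract that merely lowers the \emph{sum} $\sum_{i=1}^n\rho_i$ into one that lowers \emph{each} $\rho_i$. Translation invariance is exactly what makes this bookkeeping work, since it allows a common transfer of $\delta/n$ of cash to each agent — at zero aggregate cost and without disturbing the comonotonicity constraint on $\{Y_i\}_{i=1}^n$ — to pass through the risk measures additively. Everything else, namely the two summation arguments and the verification of membership in $\mathcal{A}$ and $\mathcal{IR}$, is routine. As a byproduct, the argument also shows that $\mathcal{S}$ is nonempty precisely when $\mathcal{PO}$ is, so no attainment of the infimum in \eqref{eq:weighted_infconv} needs to be assumed in advance.
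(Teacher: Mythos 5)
Your proof is correct and follows essentially the same route as the paper's: both inclusions are handled identically, and the key step — using translation invariance to convert an aggregate improvement in $\sum_{i}\rho_i$ into a strict Pareto improvement by redistributing the scalar contributions $\pi_i$ at zero aggregate cost — is exactly the paper's argument. Your uniform transfer of $d_i-\delta/n$ is in fact a slightly cleaner piece of bookkeeping than the paper's partition of $\mathcal{N}$ into the sets $\mathcal{N}_1,\mathcal{N}_2,\mathcal{N}_3$ with case-specific adjustments $\varepsilon_i$, since it makes every agent strictly better off in one stroke.
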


\begin{proof}
Note that $\mathcal{S}\subseteq\mathcal{PO}$ can be easily proved by contradiction. To show the reverse inclusion, assume, by way of contradiction, that there exist $\left(\left\{Y^*_i\right\}_{i=1}^{n},\left\{\pi^*_i\right\}_{i=1}^{n}\right)\in\mathcal{PO}$ such that $\left(\left\{Y^*_i\right\}_{i=1}^{n},\left\{\pi^*_i\right\}_{i=1}^{n}\right)\notin\mathcal{S}$. Then, there exist $\left(\left\{\tilde{Y}_i\right\}_{i=1}^{n},\left\{\tilde{\pi}_i\right\}_{i=1}^{n}\right)\in\mathcal{IR}$ such that
\begin{equation}
\sum_{i=1}^{n}\rho_i\left(\tilde{Y}_i+\tilde{\pi}_i\right)<\sum_{i=1}^{n}\rho_i\left(Y^*_i+\pi^*_i\right).
\label{eq:smaller_sum}
\end{equation}

Define $\mathcal{N}_1,\mathcal{N}_2,\mathcal{N}_3\subseteq\mathcal{N}$ such that,
\begin{equation*}
    \rho_i\left(\tilde{Y}_i+\tilde{\pi}_i\right)>\rho_i\left(Y^*_i+\pi^*_i\right),\quad\forall i\in\mathcal{N}_1,
\end{equation*}
\begin{equation*}
    \rho_i\left(\tilde{Y}_i+\tilde{\pi}_i\right)=\rho_i\left(Y^*_i+\pi^*_i\right),\quad\forall i\in\mathcal{N}_2,
\end{equation*}
\begin{equation*}
    \rho_i\left(\tilde{Y}_i+\tilde{\pi}_i\right)<\rho_i\left(Y^*_i+\pi^*_i\right),\quad\forall i\in\mathcal{N}_3.
\end{equation*}
Note that $\mathcal{N}_1,\mathcal{N}_2,\mathcal{N}_3$ is a partition of $\mathcal{N}$; also, by \eqref{eq:smaller_sum}, $\mathcal{N}_3\neq\varnothing$.

By assumption, $\left(\left\{Y^*_i\right\}_{i=1}^{n},\left\{\pi^*_i\right\}_{i=1}^{n}\right)\in\mathcal{PO}$, which implies $\mathcal{N}_1\neq\varnothing$. Define, for $i\in\mathcal{N}_1$,
\begin{equation*}
    \varepsilon_i=\rho_i\left(\tilde{Y}_i+\tilde{\pi}_i\right)-\rho_i\left(Y^*_i+\pi^*_i\right)>0.
\end{equation*}
Then, by \eqref{eq:smaller_sum}, there exist $\left\{\varepsilon_i\right\}_{i\in\mathcal{N}_3}$ such that, (i) $\varepsilon_i\geq 0$, for $i\in\mathcal{N}_3$, (ii) $\rho_i\left(\tilde{Y}_i+\tilde{\pi}_i+\varepsilon_i\right)\leq\rho_i\left(Y^*_i+\pi^*_i\right)$, for $i\in\mathcal{N}_3$, with at least one strict inequality, and (iii) $\sum_{i\in\mathcal{N}_3}\varepsilon_i=\sum_{i\in\mathcal{N}_1}\varepsilon_i$. Define
\begin{equation*}
    \hat{\pi}_i=\tilde{\pi}_i-\varepsilon_i,\quad\forall i\in\mathcal{N}_1,
\end{equation*}
\begin{equation*}
    \hat{\pi}_i=\tilde{\pi}_i,\quad\forall i\in\mathcal{N}_2,
\end{equation*}
\begin{equation*}
    \hat{\pi}_i=\tilde{\pi}_i+\varepsilon_i,\quad\forall i\in\mathcal{N}_3.
\end{equation*}
Note that $\left(\left\{\tilde{Y}_i\right\}_{i=1}^{n},\left\{\hat{\pi}_i\right\}_{i=1}^{n}\right)\in\mathcal{A}$; indeed,
\begin{equation*}
    \sum_{i=1}^{n}\hat{\pi}_i=\sum_{i=1}^{n}\tilde{\pi}_i-\sum_{i\in\mathcal{N}_1}\varepsilon_i+\sum_{i\in\mathcal{N}_3}\varepsilon_i=\sum_{i=1}^{n}\tilde{\pi}_i=\sum_{i=1}^{n}X_i-\sum_{i=1}^{n}Y_i.
\end{equation*}
Moreover, $\left(\left\{\tilde{Y}_i\right\}_{i=1}^{n},\left\{\hat{\pi}_i\right\}_{i=1}^{n}\right)\in\mathcal{IR}$, since
\begin{align*}
    \rho_i\left(\tilde{Y}_i+\hat{\pi}_i\right)=&\;\rho_i\left(\tilde{Y}_i+\tilde{\pi}_i\right)-\varepsilon_i=\rho_i\left(\tilde{Y}_i+\tilde{\pi}_i\right)-\left(\rho_i\left(\tilde{Y}_i+\tilde{\pi}_i\right)-\rho_i\left(Y^*_i+\pi^*_i\right)\right)\\=&\;\rho_i\left(Y^*_i+\pi^*_i\right)\leq\rho_i\left(X_i\right),\quad\forall i\in\mathcal{N}_1,
\end{align*}
\begin{equation*}
    \rho_i\left(\tilde{Y}_i+\hat{\pi}_i\right)=\rho_i\left(\tilde{Y}_i+\tilde{\pi}_i\right)=\rho_i\left(Y^*_i+\pi^*_i\right)\leq\rho_i\left(X_i\right),\quad\forall i\in\mathcal{N}_2,
\end{equation*}
\begin{equation}
    \rho_i\left(\tilde{Y}_i+\hat{\pi}_i\right)=\rho_i\left(\tilde{Y}_i+\tilde{\pi}_i+\varepsilon_i\right)\leq\rho_i\left(Y^*_i+\pi^*_i\right)\leq\rho_i\left(X_i\right),\quad\forall i\in\mathcal{N}_3,
\label{eq:strict_inequality_N_3}
\end{equation}
in which \eqref{eq:strict_inequality_N_3} has at least one strict inequality, which implies $\left(\left\{Y^*_i\right\}_{i=1}^{n},\left\{\pi^*_i\right\}_{i=1}^{n}\right)\not\in\mathcal{PO}$, a contradiction.
\end{proof}

\section{Pareto Optima for Robust Distortion Risk Measures}
\label{sec:explicit}

In this section, we provide an explicit characterization of Pareto-optimal allocations when agents use robust distortion risk measures. These are defined below.

\begin{definition}
\label{defn:robust_drm}
    A risk measure $\varrho:\mathcal{X}\to\R$ is a {robust distortion risk measure} if there exists a set of distortion functions $\mathcal{T}$ and a probability measure $\Pr$ on $(\Omega,\mathcal{F})$ such that for all random variables $Z$ on the probability space $(\Omega,\mathcal{F},\Pr)$,
        \[
            \varrho(Z)=\sup_{T\in\mathcal{T}}\int Z\,dT\circ\Pr\,.
        \]
\end{definition}

Robustness in the context of Definition \ref{defn:robust_drm} refers to the consideration of different distortion functions by the decision maker, not different probability distributions on the underlying measurable space. Our definition is in line with that of Wang and Xu \cite{wang2023preference}, who argue that these risk measures can be motivated by the ambiguity present in each agent's preference. We note that the use of the term \emph{robust distortion risk measure} is not consistent throughout the literature. For instance, Bernard et al.\ \cite{bernard2023robust} uses this term to refer to the case where the robustness relates to uncertainty in the distribution of the risk. In contrast, Definition \ref{defn:robust_drm} assumes that the distribution of a given risk is fixed with respect to a reference probability measure.

\begin{lemma}
\label{lem:convex_closed_pointwise}
    For a robust distortion risk measure with respect to a set of distortion functions $\mathcal{T}$, we have
    \[
        \sup_{T\in\mathcal{T}}\int Z\,dT\circ\Pr=\sup_{T\in\overline{\mathrm{co}}(\mathcal{T})}\int Z\,dT\circ\Pr\,,
    \]
    where $\overline{\mathrm{co}}(\mathcal{T})$ is the closed convex hull of $\mathcal{T}$ with respect to pointwise convergence.
\end{lemma}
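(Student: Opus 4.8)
The reverse inequality ``$\ge$'' is immediate from $\mathcal{T}\subseteq\overline{\mathrm{co}}(\mathcal{T})$, so the plan is to prove, for each fixed (bounded) $Z$, that $\sup_{T\in\overline{\mathrm{co}}(\mathcal{T})}\int Z\,dT\circ\Pr\le\sup_{T\in\mathcal{T}}\int Z\,dT\circ\Pr=:M$. First I would fix bounds $a\le Z\le b$ and, using translation invariance of the Choquet integral together with the layer-cake formula applied to $Z-a\ge 0$, rewrite the functional $\Lambda_Z(T):=\int Z\,dT\circ\Pr$ as $\Lambda_Z(T)=a+\int_a^b T(\Pr(Z>s))\,ds$. (I would also note that $\overline{\mathrm{co}}(\mathcal{T})$ again consists of distortion functions, since monotonicity and the boundary values $0,1$ are preserved under convex combinations and pointwise limits, so $\Lambda_Z$ is well defined on it.) Two features of this representation drive everything: for each fixed $s$ the map $T\mapsto T(\Pr(Z>s))\in[0,1]$ is affine, and for each fixed $T$ the integrand $s\mapsto T(\Pr(Z>s))$ is non-increasing on $[a,b]$. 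From the first feature, $\Lambda_Z$ is affine on distortion functions, so for any convex combination $T=\sum_k\lambda_k T_k$ of elements of $\mathcal{T}$ we get $\Lambda_Z(T)=\sum_k\lambda_k\Lambda_Z(T_k)\le M$; hence $\sup_{T\in\mathrm{co}(\mathcal{T})}\Lambda_Z(T)=M$.

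It remains to pass to the closure. Let $T^*\in\overline{\mathrm{co}}(\mathcal{T})$ and pick a net $(T_\alpha)$ in $\mathrm{co}(\mathcal{T})$ with $T_\alpha\to T^*$ pointwise on $[0,1]$. Setting $h_\alpha:=T_\alpha(\Pr(Z>\cdot))$ and $h^*:=T^*(\Pr(Z>\cdot))$, these are uniformly bounded, non-increasing functions on $[a,b]$ with $h_\alpha\to h^*$ pointwise, and I would show $\int_a^b h^*\le\liminf_\alpha\int_a^b h_\alpha$ by a one-sided Riemann-sum comparison. Concretely: fix $\varepsilon>0$ and choose a partition $a=s_0<\dots<s_N=b$ with lower sum $\sum_{j=1}^N h^*(s_j)(s_j-s_{j-1})\ge\int_a^b h^*-\varepsilon$ (valid because $h^*$ is monotone, hence Riemann integrable, and $h^*(s_j)=\inf_{[s_{j-1},s_j]}h^*$); since the net converges at the finitely many nodes $s_1,\dots,s_N$ and $h_\alpha\ge h_\alpha(s_j)$ on $[s_{j-1},s_j]$ by monotonicity, one has $\int_a^b h_\alpha\ge\sum_j h_\alpha(s_j)(s_j-s_{j-1})$ for every $\alpha$, and letting $\alpha$ run along the net forces $\liminf_\alpha\int_a^b h_\alpha\ge\sum_j h^*(s_j)(s_j-s_{j-1})\ge\int_a^b h^*-\varepsilon$. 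Letting $\varepsilon\downarrow 0$ gives $\Lambda_Z(T^*)\le\liminf_\alpha\Lambda_Z(T_\alpha)\le M$, and taking the supremum over $T^*\in\overline{\mathrm{co}}(\mathcal{T})$ concludes.

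The main obstacle is precisely this last step. Because $\overline{\mathrm{co}}(\mathcal{T})$ is a topological closure in the product (pointwise-convergence) topology, one cannot simply invoke the bounded or dominated convergence theorem, which is sequential: a net of uniformly bounded functions converging pointwise need not have converging integrals. Monotonicity of the integrand $s\mapsto T(\Pr(Z>s))$ is exactly the additional structure that makes the Riemann-sum estimate work and salvages lower semicontinuity of $\Lambda_Z$ along the net. (Had the lemma been phrased with the sequential closure of $\mathrm{co}(\mathcal{T})$, the bounded convergence theorem would suffice and this subtlety would not arise.)
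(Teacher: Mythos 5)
Your proof is correct, and its first half (reducing from $\overline{\mathrm{co}}(\mathcal{T})$ to $\mathrm{co}(\mathcal{T})$ via affinity of $T\mapsto\int Z\,dT\circ\Pr$) coincides with the paper's, which invokes linearity of the Choquet integral in $T$. Where you diverge is the closure step. The paper proves \emph{sequential} continuity of the Choquet integral under pointwise convergence of $T$, using the dominated convergence theorem with the dominating function $\One_{\{\|Z\|_\infty>x\}}$; you instead work with nets and establish only lower semicontinuity of $\Lambda_Z$, via lower Riemann sums that exploit monotonicity of $s\mapsto T(\Pr(Z>s))$ and convergence at finitely many partition nodes. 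Your argument is valid, and lower semicontinuity is indeed all that the inequality $\sup_{\overline{\mathrm{co}}(\mathcal{T})}\Lambda_Z\le\sup_{\mathcal{T}}\Lambda_Z$ requires. Your stated obstacle is, however, less severe than you suggest: $\overline{\mathrm{co}}(\mathcal{T})$ lives inside the Helly space of non-decreasing functions from $[0,1]$ to $[0,1]$, which is first countable in the product topology (a neighbourhood base at a monotone $T$ is obtained from the rationals together with the countably many discontinuity points of $T$), so the pointwise closure of a set of distortion functions is in fact its sequential closure and the paper's sequence-plus-DCT argument is adequate. What your route buys is that it never needs this topological fact and is entirely self-contained; what the paper's route buys is full sequential continuity of the objective, which it reuses later (in the existence argument of Theorem \ref{thm:po_drm_characterization}(i), after extracting a pointwise convergent subsequence by Helly's selection theorem), whereas your one-sided estimate would not suffice there.
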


\begin{proof}
    Since the Choquet integral is linear in $T$, we have that
        \[
            \sup_{T\in\mathcal{T}}\int Z\,dT\circ\Pr=\sup_{T\in\mathrm{co}(\mathcal{T})}\int Z\,dT\circ\Pr\,,
        \]
    where $\mathrm{co}(\mathcal{T})$ denotes the convex hull of $\mathcal{T}$. It suffices to show that the Choquet integral is continuous with respect to pointwise convergence of $T$. To this end, let $\{T^{(k)}\}_{k=1}^\infty$ be a sequence of continuous functions converging pointwise to $T$. Then we have
        \begin{align*}
            \lim_{k\to\infty}\int Z\,dT_k\circ\Pr
            &=\lim_{k\to\infty}\int_0^\infty T_k(\Pr(Z>x))\,dx
        =\int_0^\infty\lim_{k\to\infty}T_k(\Pr(Z>x))\,dx\\
            &=\int_0^\infty T(\Pr(Z>x))\,dx
            =\int Z\,dT\circ\Pr\,,
        \end{align*}
    where we may apply the dominated convergence theorem since
        \[
            \int_0^\infty|T_k(\Pr(Z>x))|\,dx\le\int_0^\infty\One_{\{||Z||_\infty>x\}}\,dx=||Z||_{\infty}<\infty\,.
        \]
\end{proof}

We now assume that for each $i\in\mathcal{N}$, the risk measure $\rho_i$ is a robust distortion risk measure. We allow for each agent to have heterogeneous beliefs, represented by different probability measures $\Q_i$ on the same measurable space. Specifically, we assume that for each $i\in\mathcal{N}$, there exists a set of distortion functions $\mathcal{T}_i$ such that
    \begin{equation}
    \label{eq:robust_drm}
        \rho_i(Z)=\sup_{T_i\in\mathcal{T}_i}\int Z\,dT_i\circ\Q_i\,,
    \end{equation}
for all risks $Z\in\mathcal{X}$. By Lemma \ref{lem:convex_closed_pointwise}, we may assume without loss of generality that each $\mathcal{T}_i$ is convex and closed with respect to pointwise convergence.

Note that distortion risk measures are translation invariant. For translation-invariant risk measures, a characterization of Pareto optimal allocations is given by Proposition \ref{prop:characterization}. However, in the case of robust distortion risk measures, a more explicit characterization of the set $\mathcal{S}$ is possible, as shown by the following result.

\begin{theorem}
\label{thm:po_drm_characterization}
    Suppose that for each $i\in\mathcal{N}$, the risk measure $\rho_i$ is given by a robust distortion risk measure as in \eqref{eq:robust_drm}. Then the following hold:
    \begin{enumerate}[label=(\roman*)]
        \item There exists a solution to the problem
            \begin{equation}
            \label{eq:inf_problem}
                \sup_{\{T_i\}_{i=1}^n\,\in\,\prod_{i=1}^n\mathcal{T}_i}\,\int_0^\infty\min_{i\in\mathcal{N}}\,\{T_i(\Q_i(S>x))\}\,dx\,.
            \end{equation}
            
        \item A necessary condition for the allocation $(\{Y_i^*\}_{i=1}^n,\{\pi_i^*\}_{i=1}^n)$ to be Pareto optimal is that
                \[
                    Y_i^*+\pi^*_i=g_i^*(S)+c_i^*\,,
                \]
            where $\{c_i^*\}_{i=1}^n\in\R^n$ is chosen such that $\sum_{i=1}^nc_i^*=0$ and $\{g_i^*(S)+c_i^*\}_{i=1}^n\in\mathcal{IR}$, and $\{g_i^*\}_{i=1}^n\in\mathcal{G}$ can be written in terms of the integrals of suitable functions $h_i$. Specifically, for each $i\in\mathcal{N}$, we can write $g_i^*(x)=\int_0^xh_i(z)\,dz$, where each $h_i:\R_+\to[0,1]$ is a function such that for almost every $x\in\R_+$,
                \[
                    \sum_{i\in L_x}h_i(x)=1\mbox{ and }\sum_{i\in L_x^C}h_i(x)=0\,,
                \]
            where
                \begin{align*}
                    L_x&:=\left\{i\in\mathcal{N}:
                        T_i^*(\Q_i(S>x))=\min_{j\in\mathcal{N}}\{T_j^*(\Q_j(S>x))\}\right\}\,,\\
                    L_x^C&=\mathcal{N}\setminus L_x\,,
                \end{align*}
            and $(T_1^*,\ldots,T_n^*)$ is a solution to \eqref{eq:inf_problem}.
    \end{enumerate}
\end{theorem}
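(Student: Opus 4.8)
The plan is to reduce the characterization of $\mathcal{PO}$ to a deterministic minimization over the densities of the sharing functions, and then to recognize that minimization as one half of a minimax problem whose value is exactly \eqref{eq:inf_problem}. Since robust distortion risk measures are translation invariant, Proposition~\ref{prop:characterization} gives $\mathcal{PO}=\mathcal{S}$. So let $(\{Y_i^*\}_i,\{\pi_i^*\}_i)\in\mathcal{PO}$; by Lemma~\ref{lem:translation} I would write $Y_i^*+\pi_i^*=g_i^*(S)+c_i^*$ with $\{g_i^*\}\in\mathcal{G}$ and $\sum_ic_i^*=0$, so that $\sum_i\rho_i(Y_i^*+\pi_i^*)=\sum_i\rho_i(g_i^*(S))$ by translation invariance. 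If some $\{g_i'\}\in\mathcal{G}$ had strictly smaller total risk, then $\sum_i\big[\rho_i(X_i)-\rho_i(g_i'(S))\big]>0$ because the original contract is IR, and distributing this positive slack among new deterministic contributions would produce an IR contract cheaper than $(\{Y_i^*\}_i,\{\pi_i^*\}_i)$, contradicting membership in $\mathcal{S}$; hence $\{g_i^*\}$ minimizes $\sum_i\rho_i(g_i(S))$ over $\mathcal{G}$. Finally, every $g_i\in\mathcal{G}$ is $1$-Lipschitz with $g_i(0)=0$, so $g_i(x)=\int_0^xh_i(z)\,dz$ for a Borel $h_i:\R_+\to[0,1]$, with $\sum_ih_i=1$ a.e.\ equivalent to $\sum_ig_i=\text{Id}$, and conversely any such family of densities is admissible.

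The next ingredient is the Choquet identity $\int g_i(S)\,dT_i\circ\Q_i=\int_0^\infty h_i(x)\,T_i(\Q_i(S>x))\,dx$, which I would obtain from the layer-cake representation of the Choquet integral and the change of variables $\{g_i(S)>t\}=\{S>g_i^{\leftarrow}(t)\}$, $dt=h_i(x)\,dx$; hence $\rho_i(g_i(S))=\sup_{T_i\in\mathcal{T}_i}\int_0^\infty h_i(x)\,T_i(\Q_i(S>x))\,dx$, and since $S$ is bounded, say $0\le S\le M$, all integrals are really over $[0,M]$. For part (i): by Lemma~\ref{lem:convex_closed_pointwise} I may take each $\mathcal{T}_i$ convex and closed under pointwise convergence; the set of all distortion functions is closed in $[0,1]^{[0,1]}$ (pointwise limits preserve monotonicity and the normalizations), hence compact by Tychonoff, so each $\mathcal{T}_i$ and $\prod_i\mathcal{T}_i$ are compact for pointwise convergence. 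The function $x\mapsto\min_i\{T_i(\Q_i(S>x))\}$ is $[0,1]$-valued and non-increasing on $[0,M]$, and because Riemann sums approximate integrals of monotone functions uniformly over that class, $\{T_i\}\mapsto\int_0^M\min_i\{T_i(\Q_i(S>x))\}\,dx$ is continuous for pointwise convergence; a continuous function on a compact set attains its supremum, which gives (i).

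For (ii) I set $F(\{h_i\},\{T_i\})=\int_0^M\sum_ih_i(x)\,T_i(\Q_i(S>x))\,dx$, so that minimizing $\sum_i\rho_i(g_i(S))$ over $\mathcal{G}$ equals $\inf_{\{h_i\}}\sup_{\{T_i\}\in\prod_i\mathcal{T}_i}F$, a supremum over a product being the sum of the coordinatewise suprema. Now $F$ is bilinear, the admissible densities form a convex set with $F(\cdot,\{T_i\})$ an $L^1$-continuous linear functional, and $\prod_i\mathcal{T}_i$ is convex and compact with $F(\{h_i\},\cdot)$ continuous for pointwise convergence (again via the monotonicity/Riemann-sum argument, after pushing the measure $h_i(x)\,dx$ forward by $x\mapsto\Q_i(S>x)$ and treating its atomic and non-atomic parts separately). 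Sion's minimax theorem then yields
\[
\inf_{\{h_i\}}\sup_{\{T_i\}}F=\sup_{\{T_i\}}\inf_{\{h_i\}}F=\sup_{\{T_i\}}\int_0^M\min_i\{T_i(\Q_i(S>x))\}\,dx,
\]
the last equality since the inner infimum is attained by concentrating $h_i(x)$ on any index achieving $\min_i\{T_i(\Q_i(S>x))\}$. Fixing a solution $(T_1^*,\dots,T_n^*)$ to \eqref{eq:inf_problem} (which exists by (i)) and $h_i^*:=(g_i^*)'$: since $\{h_i^*\}$ attains the outer infimum, $\{T_i^*\}$ attains the outer supremum, and the two optimal values coincide, $(\{h_i^*\},\{T_i^*\})$ is a saddle point of $F$, so $\{h_i^*\}$ minimizes $\{h_i\}\mapsto F(\{h_i\},\{T_i^*\})$. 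From $\sum_ih_i(x)T_i^*(\Q_i(S>x))\ge\min_j\{T_j^*(\Q_j(S>x))\}$, with equality iff $h_i(x)=0$ for all $i\notin L_x$, I conclude $\sum_{i\in L_x^C}h_i^*(x)=0$ and $\sum_{i\in L_x}h_i^*(x)=1$ for a.e.\ $x$; writing $g_i^*(x)=\int_0^xh_i^*(z)\,dz$ and keeping the $c_i^*$ from Lemma~\ref{lem:translation} (so $\sum_ic_i^*=0$ and $\{g_i^*(S)+c_i^*\}=\{Y_i^*+\pi_i^*\}\in\mathcal{IR}$) gives exactly the asserted form.

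The step I expect to be the main obstacle is the compactness-plus-continuity package underlying both (i) and the minimax argument: one must work in the non-metrizable topology of pointwise convergence, where dominated convergence fails for nets in general, and the continuity of the relevant integral functionals has to be argued carefully — the crucial point being that the integrands are monotone in $x$, so Riemann sums approximate them uniformly over the class and the passage to the limit (for nets, not merely sequences) is controlled. Everything else — the reduction to $\mathcal{G}$, the density reparametrization, the Choquet identity, and extracting the structure of the optimal densities from the saddle point — is essentially bookkeeping once the minimax identity is in hand.
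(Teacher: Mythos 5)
Your proposal is correct and follows essentially the same route as the paper: reduction of $\mathcal{PO}$ to the sum-minimization via Proposition \ref{prop:characterization} and Lemma \ref{lem:translation}, reparametrization by densities $h_i$, the Choquet identity, Sion's minimax theorem to pass to the sup-inf form, and extraction of the support condition on $L_x$ from the saddle point. The only cosmetic difference is in part (i), where you invoke Tychonoff compactness plus (net-)continuity of the integral functional via uniform Riemann-sum approximation of monotone integrands, while the paper extracts a pointwise-convergent subsequence from a maximizing sequence via Helly's selection theorem — the same underlying compactness fact about monotone functions.
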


\allowdisplaybreaks
\begin{proof}
\begin{enumerate}[label=(\roman*)]
    \item Since $S$ is essentially bounded by $||S||_{\infty}<\infty$, we have
            \[
                \sup_{\{T_i\}_{i=1}^n\,\in\,\prod_{i=1}^n\mathcal{T}_i}\,\int_0^\infty\min_{i\in\mathcal{N}}\,\{T_i(\Q_i(S>x))\}\,dx:=V\le||S||_{\infty}<\infty\,.
            \]
        Let $\{(T_1^{(k)},\ldots,T_n^{(k)}\}_{k=1}^\infty$ be a sequence such that
            \[
                \int_0^\infty\min_{i\in\mathcal{N}}\,\{T_i(\Q_i(S>x))\}\,dx
                    \ge V-\frac{1}{k}\,.
            \]
        Then by repeatedly applying Helly's compactness theorem (e.g., \cite[pp.\ 165-166]{Doob94}), there exists a subsequence $\{(T_1^{(k_j)},\ldots,T_n^{(k_j)}\}_{j=1}^\infty$ that converges pointwise to a limit $(T_1^*,\ldots,T_n^*)\in\prod_{i=1}^n\mathcal{T}_i$. By the proof of Lemma \ref{lem:convex_closed_pointwise}, the objective of \eqref{eq:inf_problem} is continuous with respect to pointwise convergence of the distortion functions. Hence,
            \begin{align*}
                V&\ge\int_0^\infty\min_{i\in\mathcal{N}}\,\{T_i^*(\Q_i(S>x))\}\,dx
                    =\lim_{j\to\infty}\int_0^\infty\min_{i\in\mathcal{N}}\,\{T_i^{k_j}(\Q_i(S>x))\}\,dx\\
                    &\ge V-\lim_{j\to\infty}\frac{1}{k_j}=V\,,
            \end{align*}
        which implies that $(T_1^*,\ldots,T_n^*)$ solves \eqref{eq:inf_problem}.

    \medskip

    \item Let $(\{Y_i\}_{i=1}^n,\{\pi_i\}_{i=1}^n)\in\mathcal{A}$. By Lemma \ref{lem:translation}, this allocation can be written in terms of functions $\{g_i\}_{i=1}^n\in\mathcal{G}$ and constants $\{c_i\}_{i=1}^n\in\R^n$ where $\sum_{i=1}^nc_i=0$. Conversely, if $\{g_i\}_{i=1}^n\in\mathcal{G}$ and $\{c_i\}_{i=1}^n\in\R^n$, then $(\{g_i(S)\}_{i=1}^n,\{c_i\}_{i=1}^n)\in\mathcal{A}$. Therefore
    \begin{align*}
        \inf_{\l(\{Y_i\}_{i=1}^{n},\{\pi_i\}_{i=1}^n\r)\in\mathcal{IR}\cap\mathcal{A}}\l\{\sum_{i=1}^{n}\rho_i\left(Y_i+\pi_i\right)\r\}&=\inf_{\l(\{g_i\}_{i=1}^{n},\{c_i\}_{i=1}^n\r)\in\mathcal{IR}\cap(\mathcal{G}\times\R^n)}\l\{\sum_{i=1}^{n}\rho_i\left(g_i(S)+c_i\right)\r\}\\
        &=\inf_{\l(\{g_i\}_{i=1}^{n},\{c_i\}_{i=1}^n\r)\in\mathcal{IR}\cap(\mathcal{G}\times\R^n)}\l\{ \sum_{i=1}^{n}\rho_i\left(g_i(S)\right)\r\}\,,
    \end{align*}
    where we write $(\{g_i\}_{i=1}^n,\{c_i\}_{i=1}^n)\in\mathcal{IR}$ when the allocation $(\{g_i(S)\}_{i=1}^n,\{c_i\}_{i=1}^n)\in\mathcal{IR}$. 
    This problem is solved by $(\{g_i^*\}_{i=1}^n,\{c_i^*\}_{i=1}^n)\in\mathcal{G}\times\R^n$ if and only if $\{g_i^*\}_{i=1}^n$ solves
    \begin{equation}
        \label{eq:inf_conv_simplified}
        \inf_{\{g_i\}_{i=1}^n\in\mathcal{G}}\l\{\sum_{i=1}^n\rho_i(g_i(S))\r\}\,,
    \end{equation}
    and the constants $c_i^*$ are chosen such that $(\{g_i^*(S)\}_{i=1}^n,\{c_i^*\}_{i=1}^n)\in\mathcal{IR}$. It remains to show that solutions to \eqref{eq:inf_conv_simplified} are of the given form.

    \medskip

    Since each $\rho_i$ is a robust distortion risk measure, \eqref{eq:inf_conv_simplified} can be written as 
        \[
            \quad\inf_{\{g_i\}_{i=1}^n\in\mathcal{G}}\l\{\sum_{i=1}^n\sup_{T_i\in\mathcal{T}_i}\int g_i(S)\,dT_i\circ\Q_i\r\}
            =\inf_{\{g_i\}_{i=1}^n\in\mathcal{G}}\,\sup_{\{T_i\}_{i=1}^n\in\prod_{i=1}^n\mathcal{T}_i}\l\{\sum_{i=1}^n\int g_i(S)\,dT_i\circ\Q_i\r\}\,.
        \]

    We now verify that the minimax equality holds for this problem. To this end, let $\mathcal{C}([0,M])$ denote the set of continuous functions on $[0,M]$, which is a Banach space under the supremum norm. Let $\mathcal{D}:=\R^{[0,1]}$ denote the space of functions from $[0,1]\to\R$, which is a topological vector space under the topology of pointwise convergence. With the embeddings $\mathcal{G}\subseteq\mathcal{C}([0,M])^n$ and $T_i\subseteq\mathcal{D}$, the objective function above can be viewed as a function from $\mathcal{C}([0,M])^n\times\mathcal{D}^n$ to $\R$.

    \medskip
    
    Note that the sets $\mathcal{G}$ and $\mathcal{T}_i$ are closed and convex. Since $\mathcal{G}$ is a closed subset of a product of 1-Lipschitz functions on the interval $[0,M]$, it is compact by the Arzela-Ascoli Theorem \cite[IV.6.7]{Dunford}. Furthermore, the objective function is linear in both $g_i$ and $T_i$. Finally, the objective is continuous with respect to the topologies above, since the Choquet integral is continuous with respect to the supremum norm\footnote{In fact, it is Lipschitz continuous \cite[Proposition 4.11]{MarinacciMontrucchio}.}, and it is also sequentially continuous with respect to pointwise convergence of $T_i$ as shown by Lemma \ref{lem:convex_closed_pointwise}.

    \medskip
    
    Therefore by Sion's minimax theorem \cite[Theorem 2.132]{barbuprecupanu}, the minimax equality holds. Exchanging the order of the infimum and supremum yields the problem
        \begin{equation}
            \label{eq:maximin}
            \sup_{\{T_i\}_{i=1}^n\in\prod_{i=1}^n\mathcal{T}_i}\,\inf_{\{g_i\}_{i=1}^n\in\mathcal{G}}\l\{\sum_{i=1}^n\int g_i(S)\,dT_i\circ\Q_i\r\}\,.
        \end{equation}

    By standard results on minimax problems (e.g., \cite[Section 2.3]{barbuprecupanu}), a necessary condition for $\{g_i^*\}_{i=1}^n$ to solve \eqref{eq:inf_conv_simplified} is that for every $(T_1^*,\ldots,T_n^*)$ solving \eqref{eq:maximin}, we have
        \begin{align*}
            \sum_{i=1}^n\int g_i^*(S)\,dT_i^*\circ\Q_i=\inf_{\{g_i\}_{i=1}^n\in\mathcal{G}}\l\{\sum_{i=1}^n\int g_i(S)\,dT_i^*\circ\Q_i\r\}\,.
        \end{align*}

    To complete the proof, it suffices to show that for every fixed choice of distortion functions $(T_1,\ldots,T_n)$, the functions $\{g_i^*\}_{i=1}^n$ solve the problem
        \begin{equation}
        \label{eq:inner_inf}
            \inf_{\{g_i\}_{i=1}^n\in\mathcal{G}}\l\{\sum_{i=1}^n\int g_i(S)\,dT_i\circ\Q_i\r\}
        \end{equation}
    if and only if $\{g_i^*\}_{i=1}^n$ are of the given form, and that the value of problem \eqref{eq:inner_inf} is
        \[
            \int_0^\infty\min_{i\in\mathcal{N}}\,\{T_i(\Q_i(S>x))\}\,dx\,.
        \]

    To this end, we first check that $\{g_i^*\}_{i=1}^n\in\mathcal{G}$. For all $x\in\R_+$, we have
    \begin{align*}
        \sum_{i=1}^ng_i^*(x)
        =\sum_{i=1}^n\int_0^xh_i(z)\,dz
    =\int_0^x\sum_{i=1}^nh_i(z)\,dz
        =\int_0^x1\,dz
        =x\,.
    \end{align*}
    Furthermore, since the derivative of $g_i^*$ is non-negative, $g_i^*$ is increasing, and so $\{g_i^*\}_{i=1}^n\in\mathcal{G}$ and $g_i^*(S)$ is comonotone with $S$. Now let $\{\tilde{g}_i\}_{i=1}^n\in\mathcal{G}$. We have
    \begin{align*}
        \sum_{i=1}^n\rho_i(\tilde{g}_i(S))
        &=\sum_{i=1}^n\int_0^\infty T_i(\Q_i(S>x))\tilde{g}_i'(x)\,dx
        =\int_0^\infty\sum_{i=1}^nT_i(\Q_i(S>x))\tilde{g}_i'(x)\,dx\\
        &\ge\int_0^\infty\sum_{i=1}^n\min_{i=1,\ldots,n}\{T_i(\Q_i(S>x))\}\tilde{g}_i'(x)\,dx\stepcounter{equation}\tag{\theequation}\label{ineq}\\
        &=\int_0^\infty\min_{i=1,\ldots,n}\{T_i(\Q_i(S>x))\}\sum_{i=1}^n\tilde{g}_i'(x)\,dx\\
        &=\int_0^\infty\min_{i=1,\ldots,n}\{T_i(\Q_i(S>x))\}\,dx\\
        &=\int_0^\infty\min_{i=1,\ldots,n}\{T_i(\Q_i(S>x))\}\cdot1+\sum_{i\in L_x^C}T_i(\Q_i(S>x))\cdot0\,dx\\
        &=\int_0^\infty\min_{i=1,\ldots,n}\{T_i(\Q_i(S>x))\}\cdot\sum_{i\in L_x}h_i(x)+\sum_{i\in L_x^C}T_i(\Q_i(S>x))\cdot h_i(x)\,dx\\
        &=\int_0^\infty\sum_{i\in L_x}\min_{i=1,\ldots,n}\{T_i(\Q_i(S>x))\}\cdot h_i(x)+\sum_{i\in L_x^C}T_i(\Q_i(S>x))\cdot h_i(x)\,dx\\
        &=\int_0^\infty\sum_{i\in L_x}T_i(\Q_i(S>x))\cdot h_i(x)+\sum_{i\in L_x^C}T_i(\Q_i(S>x))\cdot h_i(x)\,dx\\
        &=\int_0^\infty\sum_{i=1}^nT_i(\Q_i(S>x))h_i(x)\,dx\\
        &=\sum_{i=1}^n\int_0^\infty T_i(\Q_i(S>x))h_i(x)\,dx
        =\sum_{i=1}^n\rho_i(g_i^*(S))\,.
    \end{align*}
    Therefore $\{g_i^*\}_{i=1}^n$ solves \eqref{eq:inf_conv_simplified}.

    \bigskip

    To show the converse, let $(\{Y_i\}_{i=1}^n,\{\pi_i\}_{i=1}^n)\in\mathcal{A}$. By Lemma \ref{lem:translation}, we may write
    \[Y_i+\pi_i=\tilde{g}_i(x)+c_i\,,\]
    for some $\{\tilde{g}_i\}_{i=1}^n\in\mathcal{G}$ and $\{c_i\}_{i=1}^n\in\R^n$ such that $\sum_{i=1}^nc_i=0$. Suppose that $\tilde{g}_i$ are not of the specified form. Namely, $\sum_{i\in L_x^C}\tilde{g}_i'(x)>0$ on a set $\mathcal{A}$ of positive measure. Then for every $x$ in $\mathcal{A}$, we have
    \begin{align*}
        \sum_{i=1}^nT_i(\Q_i(S>x))\tilde{g}_i'(x)&=\sum_{i\in L_x}T_i(\Q_i(S>x))\tilde{g}_i'(x)+\sum_{i\in L_x^C}T_i(\Q_i(S>x))\tilde{g}_i'(x)\\
        &=\min_{i=1,\ldots,n}\{T_i(\Q_i(S>x))\}\tilde{g}_i'(x)+\sum_{i\in L_x^C}T_i(\Q_i(S>x))\tilde{g}_i'(x)\\
        &>\min_{i=1,\ldots,n}\{T_i(\Q_i(S>x))\}\tilde{g}_i'(x)+\sum_{i\in L_x^C}\min_{i=1,\ldots,n}\{T_i(\Q_i(S>x))\}\tilde{g}_i'(x)\\
        &=\min_{i=1,\ldots,n}\{T_i(\Q_i(S>x))\}
        =\sum_{i=1}^n\min_{i=1,\ldots,n}\{T_i(\Q_i(S>x))\}\tilde{g}_i'(x)\,,
    \end{align*}
    where the strict inequality follows because $L_x^C$ is non-empty and $\tilde{g}_i'(x)$ are not all zero for $i\in L_x^C$. Taking the integral over the set $\mathcal{A}$ of positive measure gives
    \[\int_\mathcal{A}\sum_{i=1}^nT_i(\Q_i(S>x))\tilde{g}_i'(x)\,dx>\int_\mathcal{A}\sum_{i=1}^n\min_{i=1,\ldots,n}\{T_i(\Q_i(S>x))\}\tilde{g}_i'(x)\,dx\,.\]

\noindent Therefore in this case, the inequality \eqref{ineq} is strict, and so $\tilde{g}_i(x)$ does not solve \eqref{eq:inf_conv_simplified}.
\end{enumerate}
\end{proof}

In other words, Theorem \ref{thm:po_drm_characterization} provides a necessary condition for an allocation to be Pareto-optimal allocation. This result is similar to the characterization obtained by Ghossoub and Zhu \cite[Theorem 4.15]{ghossoubzhu2024}. However, a key difference is that the present result of Theorem \ref{thm:po_drm_characterization} does not require concavity of the distortion function (or equivalently, the convexity of the risk measure). The present result also holds under a setting of heterogeneous beliefs, as represented by the different probability measures $\Q_i$. These stronger conclusions can be obtained in the setting of this paper because the admissible allocations are constrained to be comonotone, which was not the case in Ghossoub and Zhu \cite{ghossoubzhu2024}.

In particular, Pareto-optimal allocations are obtained as translations of suitable functions of the aggregate risk. Part (ii) of Theorem \ref{thm:po_drm_characterization} decomposes the optimal risk allocation for agent $i$ into a constant $c_i^*$ and a risky portion $g_i^*(S)$, which is normalized so that $g_i^*(0)=0$. Additionally, for any fixed $\{g_i^*\}_{i=1}^n$ given by Theorem \ref{thm:po_drm_characterization}, it is always possible to choose $\{c_i^*\}_{i=1}^n\in\R^n$ such that $\{g_i^*(S)+c_i^*\}_{i=1}^n\in\mathcal{IR}$. One such choice is given by
    \begin{equation}
    \label{eq:ir_example}
        c_i^*:=\rho_i(X_i)-\rho(g_i^*(S))\,,\quad i\in\{1,\ldots,n-1\}\,,\quad\quad
            c_n^*:=-\sum_{i=1}^{n-1}c_i^*\,.
    \end{equation}
Then by translation invariance of each $\rho_i$, we have $\rho_i(X_i)-\rho(g_i^*(S)+c_i^*)=0$ for $i\in\{1,\ldots,n\}$. Furthermore,
    \begin{align*}
        \rho_n(X_n)-\rho_n(g_n^*(S)+c_n^*)&=\rho_n(X_n)-\rho_n(g_n^*(S))+\sum_{i=1}^{n-1}(\rho_i(X_i)-\rho(g_i^*(S)))\\
            &=\left(\sum_{i=1}^n\rho_i(X_i)\right)-\left(\sum_{i=1}^n\rho_i(g_i^*(S))\right)\ge0\,,
    \end{align*}
where the last inequality follows since $\{g_i^*(S)+c_i^*\}_{i=1}^n\in\mathcal{S}$ and the no-risk-sharing arrangement $\{X_i\}$ is individually rational.

More generally, the role of the constants $c_i^*$ is to determine the distribution of the aggregate welfare gain from P2P insurance among the agents in the market. We define the aggregate welfare gain as the difference between the aggregate valuation of the individuals' initial risks, given by $\sum_{i=1}^n\rho_i^*(X_i)$, and the aggregate valuation of the agents' post-transfer risks, given by $\sum_{i=1}^n\rho_i(g_i^*(S)+c_i^*)$:
$$W:=\sum_{i=1}^n\left(\rho_i(X_i)-\rho_i(g_i^*(S)+c_i^*)\right)=\sum_{i=1}^n\left(\rho_i(X_i)-\rho_i(g_i^*(S))\right).$$
Note that $\sum_{i=1}^n\rho_i(g_i^*(S)+c_i^*)$ is the value of problem \eqref{eq:weighted_infconv}, and therefore does not depend on the choice of the Pareto-optimal allocation $\{g_i^*(S)+c_i^*\}_{i=1}^n$. Hence, the aggregate welfare gain $W$ is well-defined, and $W\ge0$. Now let $\{w_i\}_{i=1}^n\in\R_n$ be such that $\sum_{i=1}^nw_i=W$ and $w_i\ge0$ for all $i\in\mathcal{N}$. Fix $\{g_i^*\}_{i=1}^n$ satisfying Theorem \ref{thm:po_drm_characterization}, and define
    \[
        c_i^*:=\rho_i(X_i)-\rho(g_i^*(S))-w_i\,,\quad \forall i\in\mathcal{N}\,.
    \]
By construction, we have $\sum_{i=1}^nc_i^*=\left(\sum_{i=1}^n\rho_i(X_i)\right)-\left(\sum_{i=1}^n\rho(g_i^*(S))\right)-W=0$. The allocation $\{g_i^*(S)+c_i^*\}$ is the risk-sharing arrangement that provides a welfare gain of $w_i$ to each agent $i$, in the sense that
    \[
        \rho_i(X_i)-\rho(g_i^*(S)+c_i^*)=w_i\ge0\,.
    \]
Hence, $\{g_i^*(S)+c_i^*\}\in\mathcal{IR}$. Note that the above example \eqref{eq:ir_example} corresponds to the arrangement that allocates the entire welfare gain to agent $n$: i.e., $(w_1,\ldots,w_n)=(0,0,\ldots,W)$.


\subsection{Common Parametric Families of Distortion Functions}

When each set of distortion functions $\mathcal{T}_i$ is a singleton, there is no robustness in the risk measure of each agent, and we recover the simpler setting of P2P risk-sharing with distortion risk measures. Suppose that $\mathcal{T}_i=\{T_i\}$ for all $i\in\mathcal{N}$. As a special case of Theorem \ref{thm:po_drm_characterization}, we obtain a closed-form solution for Pareto optima that is known in the literature (e.g., Liu \cite[Theorem 3.3]{liu2020weighted}). This result is restated below.

\begin{corollary}
\label{cor:characterization_no_robust}
    Suppose that for each $i\in\mathcal{N}$, the risk measure $\rho_i$ is given by a robust distortion risk measure as in \eqref{eq:robust_drm}. Then an allocation $(\{Y_i^*\}_{i=1}^n,\{\pi_i^*\}_{i=1}^n)$ is Pareto optimal if and only if
        \[
            Y_i^*+\pi^*_i=g_i^*(S)+c_i^*\,,
        \]
    where $\{c_i^*\}_{i=1}^n\in\R^n$ is chosen such that $\sum_{i=1}^nc_i^*=0$ and $\{g_i^*(S)+c_i^*\}_{i=1}^n\in\mathcal{IR}$, and $\{g_i^*\}_{i=1}^n\in\mathcal{G}$ can be written in terms of the integrals of suitable functions $h_i$. Specifically, for each $i\in\mathcal{N}$, we can write $g_i^*(x)=\int_0^xh_i(z)\,dz$, where each $h_i:\R_+\to[0,1]$ is a function such that for almost every $x\in\R_+$,
        \[
            \sum_{i\in L_x}h_i(x)=1\mbox{ and }\sum_{i\in L_x^C}h_i(x)=0\,,
        \]
    where
        \begin{align*}
            L_x&:=\left\{i\in\mathcal{N}:
                T_i(\Q_i(S>x))=\min_{j\in\mathcal{N}}\{T_j(\Q_j(S>x))\}\right\}\,,\\
            L_x^C&=\mathcal{N}\setminus L_x\,.
        \end{align*}
\end{corollary}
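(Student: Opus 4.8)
The plan is to obtain the corollary as the specialization of Theorem \ref{thm:po_drm_characterization} to the case in which each set $\mathcal{T}_i$ is a singleton, where the minimax step becomes trivial and the \emph{necessary} condition of the theorem can be upgraded to a full characterization. First, since every distortion risk measure is translation invariant, Proposition \ref{prop:characterization} gives $\mathcal{PO}=\mathcal{S}$, where $\mathcal{S}$ is the set of solutions to \eqref{eq:weighted_infconv}. Then, exactly as in the proof of Theorem \ref{thm:po_drm_characterization}, Lemma \ref{lem:translation} reduces membership in $\mathcal{S}$ to the following: $(\{Y_i^*\}_{i=1}^n,\{\pi_i^*\}_{i=1}^n)\in\mathcal{S}$ if and only if $Y_i^*+\pi_i^*=g_i^*(S)+c_i^*$ with $\sum_{i=1}^nc_i^*=0$, $\{g_i^*(S)+c_i^*\}_{i=1}^n\in\mathcal{IR}$, and $\{g_i^*\}_{i=1}^n$ solving \eqref{eq:inf_conv_simplified}. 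It therefore suffices to show that, under $\mathcal{T}_i=\{T_i\}$ for all $i\in\mathcal{N}$, the solutions of \eqref{eq:inf_conv_simplified} are precisely the functions of the stated form.

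Next, I would observe that when $\mathcal{T}_i=\{T_i\}$, each $\rho_i$ is a single Choquet integral, so $\rho_i(g_i(S))=\int_0^\infty T_i(\Q_i(S>x))\,g_i'(x)\,dx$, and hence problem \eqref{eq:inf_conv_simplified} coincides verbatim with the inner infimum \eqref{eq:inner_inf} appearing in the proof of Theorem \ref{thm:po_drm_characterization} for the fixed tuple $(T_1,\ldots,T_n)$. Equivalently, the supremum over $\prod_{i=1}^n\mathcal{T}_i$ in \eqref{eq:maximin} (and in \eqref{eq:inf_problem}) is a supremum over a one-point set, so $(T_1^*,\ldots,T_n^*)=(T_1,\ldots,T_n)$ is the unique maximizer and the quantifier ``for every solution to \eqref{eq:maximin}'' in the theorem's necessary condition collapses to a single instance.

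It then remains to invoke the two halves of the argument already carried out for \eqref{eq:inner_inf} in the proof of Theorem \ref{thm:po_drm_characterization}. For the forward direction, the displayed chain of (in)equalities there shows that any $\{g_i^*\}_{i=1}^n$ of the stated form attains the value $\int_0^\infty\min_{i\in\mathcal{N}}\{T_i(\Q_i(S>x))\}\,dx$, which lower-bounds $\sum_{i=1}^n\rho_i(\tilde g_i(S))$ for every $\{\tilde g_i\}_{i=1}^n\in\mathcal{G}$; hence $\{g_i^*\}_{i=1}^n$ solves \eqref{eq:inf_conv_simplified}, so by the reduction above the associated allocation is Pareto optimal. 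For the converse, the final paragraph of that proof shows that if $\{\tilde g_i\}_{i=1}^n\in\mathcal{G}$ is \emph{not} of the stated form, i.e. $\sum_{i\in L_x^C}\tilde g_i'(x)>0$ on a set of positive Lebesgue measure, then inequality \eqref{ineq} is strict, so $\{\tilde g_i\}_{i=1}^n$ does not solve \eqref{eq:inf_conv_simplified}; hence a Pareto-optimal allocation must have $g_i^*$ of the stated form. Finally, the existence of constants $c_i^*$ with $\sum_{i=1}^nc_i^*=0$ and $\{g_i^*(S)+c_i^*\}_{i=1}^n\in\mathcal{IR}$ is guaranteed by the explicit construction \eqref{eq:ir_example}, using translation invariance of the $\rho_i$ and the fact that the no-risk-sharing allocation $\{X_i\}_{i=1}^n$ is itself individually rational.

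I expect no genuinely new obstacle here, since all the analytic content already lives inside Theorem \ref{thm:po_drm_characterization}. The only point requiring care is the logical upgrade from ``necessary condition'' in the theorem to ``if and only if'' in the corollary: this works precisely because, with a singleton $\mathcal{T}_i$, the maximin problem \eqref{eq:maximin} has a unique and trivial maximizer, so solving \eqref{eq:inf_conv_simplified} is literally the same as solving the inner infimum \eqref{eq:inner_inf}, for which both implications were established in the proof of the theorem.
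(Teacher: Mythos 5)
Your proposal is correct and follows essentially the same route as the paper: specialize Theorem \ref{thm:po_drm_characterization} to singleton sets $\mathcal{T}_i=\{T_i\}$, observe that problems \eqref{eq:inf_conv_simplified} and \eqref{eq:maximin} then coincide so that the inner-infimum argument (both directions of which are already proved in the theorem) yields the full ``if and only if''. The paper's own proof is just a terser statement of exactly this reduction, so nothing further is needed.
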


\begin{proof}
    We apply Theorem \ref{thm:po_drm_characterization} in the special case where for $i\in\mathcal{N}$, each $\mathcal{T}_i$ is given by the singleton set $\{T_i\}$. However, note that we obtain a stronger result by showing sufficiency of the characterization as well as necessity. The result follows from the same proof as that of Theorem \ref{thm:po_drm_characterization}, after applying the fact that problems \eqref{eq:inf_conv_simplified} and \eqref{eq:maximin} are identical problems in this special case.
\end{proof}

In this subsection, we examine the consequences of Corollary \ref{cor:characterization_no_robust} when agents' preferences are determined by common parametric families of probability distortion functions. First, suppose that each $T_i$ is an inverse S-shaped distortion function from a parametric family, and let $\alpha_i$ denote the value of the parameter that determines $T_i$. It can be verified that certain values of the parameter $\alpha_i$ are ``more S-shaped'', which indicates a larger deviation from risk neutrality. To facilitate an economically meaningful comparison, we provide the following definition, which can be interpreted as a proxy for the level of risk aversion for a distortion function.

\medskip

\begin{definition}[Probabilistic Risk Aversion Index]
    For a given distortion $T_i$ assumed to be twice differentiable, we define the Probabilistic Risk Aversion Index of $T_i$ to be
        \[
            PRA_i(t):=-\frac{T_i^{\prime\prime}(t)}{T_i^\prime(t)}\,.
        \]
\end{definition}

This index is similar to the index of ambiguity aversion proposed by Carlier and Dana \cite{CarlierDana2008a}. Indeed, if $T_2=g\circ T_1$ where $g$ is a concave function, then it can be verified that $PRA_2\ge PRA_1$. Hence, $T_2$ is more risk averse than $T_1$, which is consistent with the notion that the risk aversion of distortion risk measures is due to concavity of the distortion function (see Yaari \cite{yaari} for more on this interpretation). Alternatively, we may use an index of relative probabilistic risk aversion, as defined below.

\begin{definition}[Relative Probabilistic Risk Aversion Index]
    For a given distortion $T_i$ assumed to be twice differentiable, we define the {Relative Probabilistic Risk Aversion Index} of $T_i$ to be
    \[RPRA_i(t):=t\times PRA_i(t)=-\frac{t\,T_i^{\prime\prime}(t)}{T_i^\prime(t)}\,.\]
\end{definition}

One example of inverse S-shaped distortion functions is the parametric family of Prelec-1 distortion functions introduced by Prelec \cite{Prelec98}, who showed that these distortion functions are inverse-S shaped with a unique inflection point at $(e^{-1},e^{-1})$. These distortion functions are of the form
\[T_i(t)=\exp(-(-\ln(t))^{\alpha_i})\,,\]
where $\alpha_i\in(0,1)$.

In the following, suppose that each $\rho_i$ is a distortion risk measure with respect to a Prelec-1 distortion function with parameter $\alpha_i$. Let $j,k\in\mathcal{N}$ such that $\alpha_j=\max_{i\in\mathcal{N}}\alpha_i$ and $\alpha_k=\min_{i\in\mathcal{N}}\alpha_i$.

\begin{proposition}
\label{prop:pra_prelec}
    For $t\in(0,e^{-1})$, the probabilistic risk aversion index $PRA_i(t)$ is larger for smaller values of the parameter $\alpha_i$. In particular,
        \[PRA_k(t)\ge PRA_i(t)\ge PRA_j(t)\,.\]
\end{proposition}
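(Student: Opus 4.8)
The plan is to compute $PRA_i(t)$ explicitly for the Prelec-1 family $T_i(t)=\exp\bigl(-(-\ln t)^{\alpha_i}\bigr)$ and then show that the resulting expression is monotone decreasing in $\alpha_i$ on the interval $t\in(0,e^{-1})$. First I would substitute $u:=-\ln t>1$ (the restriction $t<e^{-1}$ is exactly $u>1$) to simplify the differentiation. Writing $T_i(t)=e^{-u^{\alpha_i}}$ and using $\tfrac{du}{dt}=-\tfrac1t$, a direct chain-rule computation gives $T_i'(t)=\tfrac{\alpha_i}{t}\,u^{\alpha_i-1}e^{-u^{\alpha_i}}$, and differentiating once more yields a formula for $T_i''(t)$; the ratio $PRA_i(t)=-T_i''(t)/T_i'(t)$ will collapse to something of the form $\tfrac1t\bigl(1 - (\alpha_i-1)u^{-1} + \alpha_i u^{\alpha_i-1}\bigr)$ (up to checking the exact coefficients). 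The key point is that after the dust settles, $t\cdot PRA_i(t)$ depends on $\alpha_i$ only through the terms involving $(\alpha_i-1)/u$ and $\alpha_i u^{\alpha_i-1}$.

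Next I would treat $\alpha\mapsto t\,PRA_i(t)$ (for fixed $u>1$, equivalently fixed $t\in(0,e^{-1})$) as a function of the continuous parameter $\alpha\in(0,1)$ and show it is decreasing. The term $-(\alpha-1)/u$ is clearly decreasing in $\alpha$, so the whole question reduces to the monotonicity of $\alpha\mapsto \alpha\,u^{\alpha-1}$ on $(0,1)$ when $u>1$. Differentiating this in $\alpha$ gives $u^{\alpha-1}(1+\alpha\ln u)$, which is positive — so $\alpha u^{\alpha-1}$ is actually \emph{increasing} in $\alpha$, which looks like it cuts the wrong way. This tells me the precise coefficients matter: I expect the correct reduced expression to have the $u^{\alpha_i-1}$ term entering with a sign or a competing factor that, combined with the $(\alpha_i-1)/u$ term, makes the net derivative in $\alpha$ negative for all $u>1$. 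So the honest first step is to carry out the second derivative carefully and not guess; once the exact formula is in hand, the monotonicity-in-$\alpha$ claim should reduce to a one-variable inequality in $u>1$ and $\alpha\in(0,1)$ that can be verified by differentiating in $\alpha$ and checking the sign, possibly using $\ln u>0$ and elementary bounds.

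Having established that $\alpha\mapsto PRA(t;\alpha)$ is decreasing on $(0,1)$ for each fixed $t\in(0,e^{-1})$, the proposition follows immediately: since $\alpha_k=\min_i\alpha_i\le\alpha_i\le\max_i\alpha_i=\alpha_j$, monotonicity gives $PRA_k(t)\ge PRA_i(t)\ge PRA_j(t)$ for every $i\in\mathcal{N}$ and every $t\in(0,e^{-1})$.

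The main obstacle I anticipate is purely computational bookkeeping in the second derivative: getting the coefficients of the three terms in $PRA_i$ exactly right, since the sign of the parameter-derivative of the whole expression hinges delicately on how the $(\alpha_i-1)u^{-1}$ and $\alpha_i u^{\alpha_i-1}$ contributions combine. A secondary subtlety is making sure the inequality is strict or non-strict as claimed and that the endpoint behavior near $t\to 0^+$ (i.e.\ $u\to\infty$) and $t\to e^{-1}$ (i.e.\ $u\to 1^+$) does not spoil monotonicity — but since we only need it on the open interval and the functions are smooth there, this should be routine once the key one-variable inequality is in place.
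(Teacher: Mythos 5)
Your overall strategy is exactly the paper's: compute $PRA_i$ in closed form for the Prelec-1 family and show it is decreasing in $\alpha_i$ for each fixed $t\in(0,e^{-1})$ by differentiating in the parameter; the substitution $u=-\ln t>1$ is a harmless cosmetic simplification. However, as written the proposal stops short of the one step that decides the matter, and the tentative formula you wrote down has the signs of both $\alpha$-dependent terms reversed --- which is precisely what produces your ``cuts the wrong way'' worry. Carrying out the computation (e.g.\ by logarithmic differentiation of $T_i'(t)=\tfrac{\alpha_i}{t}\,u^{\alpha_i-1}e^{-u^{\alpha_i}}$) gives
\[
t\,PRA_i(t)\;=\;1+\frac{\alpha_i-1}{u}-\alpha_i u^{\alpha_i-1}\,,
\]
which is the paper's expression $PRA_i(t)=\bigl(\ln t+\alpha_i(-\ln t)^{\alpha_i}-\alpha_i+1\bigr)/(t\ln t)$ rewritten in $u$. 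With the correct signs, the term $(\alpha_i-1)/u$ is \emph{increasing} in $\alpha_i$, but only with slope $1/u$, while $-\alpha_i u^{\alpha_i-1}$ is decreasing with slope $u^{\alpha_i-1}(1+\alpha_i\ln u)$; since $u>1$ gives $u^{\alpha_i-1}=u^{\alpha_i}/u>1/u$ and $1+\alpha_i\ln u>1$, the net derivative is
\[
\frac{\partial}{\partial\alpha_i}\bigl(t\,PRA_i(t)\bigr)\;=\;\frac{1}{u}-u^{\alpha_i-1}\bigl(1+\alpha_i\ln u\bigr)\;<\;0
\]
for all $u>1$ and $\alpha_i\in(0,1)$, exactly as the paper concludes (its numerator bound $u^{\alpha_i}+\alpha_i u^{\alpha_i}\ln u-1>0$ is the same inequality). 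So your final step --- monotonicity in $\alpha$ combined with $\alpha_k\le\alpha_i\le\alpha_j$ --- is fine, and the plan does close; but until the second derivative is actually computed with the right coefficients, the proposal rests on an unverified (and, in the form you guessed, false-looking) inequality rather than a proof.
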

\begin{proof}
    For a given parameter $\alpha_i\in(0,1)$, we have
    \begin{align*}
        \frac{d}{dt}\,\exp(-(-\ln(t))^{\alpha_i})&=-\dfrac{\alpha_i\mathrm{e}^{-\left(-\ln\left(t\right)\right)^{\alpha_i}}\cdot\left(-\ln\left(t\right)\right)^{\alpha_i}}{t\ln\left(t\right)}\\
        \frac{d^2}{dt^2} \, \exp(-(-\ln(t))^{\alpha_i})&=\dfrac{\alpha_i\mathrm{e}^{-\left(-\ln\left(t\right)\right)^{\alpha_i}}\cdot\left(-\ln\left(t\right)\right)^{\alpha_i}\left(\ln\left(t\right)+\alpha_i\cdot\left(-\ln\left(t\right)\right)^{\alpha_i}-\alpha_i+1\right)}{t^2\ln^2\left(t\right)}\\
        PRA_i(t)&=\frac{\ln(t)+\alpha_i(-\ln(t))^{\alpha_i}-\alpha_i+1}{t\ln(t)}\,.
    \end{align*}
    Differentiating this with respect to $\alpha_i$ gives
    \[\frac{d}{d\alpha_i}\,PRA_i(t)=\frac{(-\ln(t))^{\alpha_i}+\alpha_i\cdot\ln(-\ln(t))\cdot(-\ln(t))^{\alpha_i}-1}{t\ln(t)}\,.\]
    For $t\in(0,e^{-1})$, we have $-\ln(t)>1$. Therefore
    \begin{align*}
        (-\ln(t))^{\alpha_i}+\alpha_i\cdot\ln(-\ln(t))\cdot(-\ln(t))^{\alpha_i}-1&>1^{\alpha_i}+\alpha_i\cdot\ln(1)\cdot1^{\alpha_i}-1
        =0.
    \end{align*}
    Since $t\ln(t)<0$, this implies that 
    \[\frac{d}{d\alpha_i}\,PRA_i(t)<0\,,\]
    which yields the desired result.
\end{proof}

This result can be interpreted in the following manner. Recall that in the context of distortion risk measures, each distortion function $T_i$ is applied to the survival function $\Pr(S>x)$. Hence, the argument of the distortion $t\in(0,e^{-1})$ represents the probability of a tail event. Proposition \ref{prop:pra_prelec} then implies that agent $k$ is most averse to tail events, provided that the probability of this tail event is no more than $e^{-1}\approx0.3679$.

\medskip

\begin{proposition}
\label{prop:prelec}
    Suppose each $\rho_i$ is distortion risk measure given by a Prelec-1 distortion function with parameter $\alpha_i$, and let $\alpha_j=\max_{i\in\mathcal{N}}\alpha_i$ and $\alpha_k=\min_{i\in\mathcal{N}}\alpha_i$. Then a Pareto-optimal allocation $(\{Y_i^*\}_{i=1}^n,\{\pi_i^*\}_{i=1}^n)$ is given by
    \[Y_i^*+\pi_i^*=g_i^*(S)+c_i^*\,,\]
    where $\{c_i\}_{i=1}^n\in\R^n$ is chosen so that the allocation is $\mathcal{IR}$, and
    \begin{align*}
        g_k^*(x)&=\min\{x,d^*\}\,,\\
        g_j^*(x)&=\max\{x-d^*,0\}\,,\\
        g_i^*(x)&=0,\mbox{ for }i\in\mathcal{N}\setminus\{j,k\}\,.
    \end{align*}
    where $d^*=\mathrm{VaR}_{e^{-1}}(S)$.
\end{proposition}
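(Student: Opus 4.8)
The plan is to invoke Corollary \ref{cor:characterization_no_robust} in the present situation, where each $\mathcal{T}_i=\{T_i\}$ is a singleton and (as throughout this subsection) all agents use the common reference measure $\Pr$, and then to verify directly that the exhibited functions $g_i^*$ satisfy its characterizing conditions. Concretely, I would set $h_k(x):=\One_{\{x<d^*\}}$, $h_j(x):=\One_{\{x>d^*\}}$, and $h_i:=0$ for $i\in\mathcal{N}\setminus\{j,k\}$; each $h_i$ is Borel measurable with values in $\{0,1\}\subseteq[0,1]$, one has $\int_0^xh_k=\min\{x,d^*\}=g_k^*(x)$ and $\int_0^xh_j=\max\{x-d^*,0\}=g_j^*(x)$ for $x\ge0$ (using $d^*=\mathrm{VaR}_{e^{-1}}(S)\ge0$ since $S\ge0$), and $\sum_{i=1}^nh_i(x)=\One_{\{x<d^*\}}+\One_{\{x>d^*\}}=1$ for every $x\ne d^*$, so $\sum_{i=1}^ng_i^*=\mathrm{Id}$ and every $g_i^*$ is non-decreasing; hence $\{g_i^*\}_{i=1}^n\in\mathcal{G}$, and by the argument in the proof of Theorem \ref{thm:po_drm_characterization}, $(\{g_i^*(S)\}_{i=1}^n,\{c_i^*\}_{i=1}^n)\in\mathcal{A}$ for any $\{c_i^*\}$ with $\sum_ic_i^*=0$.

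The substantive step is to identify $L_x=\{i\in\mathcal{N}:T_i(\Pr(S>x))=\min_{m\in\mathcal{N}}T_m(\Pr(S>x))\}$. Fix $x$ and write $t:=\Pr(S>x)$. For $t\in(0,1)$, put $u:=-\ln t>0$, so that $T_i(\Pr(S>x))=\exp(-u^{\alpha_i})$; since $s\mapsto e^{-s}$ is strictly decreasing, minimizing $T_i$ over $i$ is equivalent to maximizing $u^{\alpha_i}$ over $i$. Since $\alpha\mapsto u^{\alpha}$ is strictly increasing when $u>1$, strictly decreasing when $0<u<1$, and constant when $u=1$, and since $u>1\iff t<e^{-1}$ and $u=1\iff t=e^{-1}$, I obtain: $L_x=\arg\max_m\alpha_m\ni j$ if $t<e^{-1}$; $L_x=\arg\min_m\alpha_m\ni k$ if $t>e^{-1}$; and $L_x=\mathcal{N}$ if $t=e^{-1}$ (also if $t\in\{0,1\}$, where every $T_i$ equals $0$ or $1$ respectively). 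Because $\Pr(S>\cdot)$ is non-increasing and right-continuous, the definition of $d^*=\mathrm{VaR}_{e^{-1}}(S)$ gives $\Pr(S>x)>e^{-1}$ for every $x<d^*$ and $\Pr(S>x)\le e^{-1}$ for every $x>d^*$; consequently $k\in L_x$ for every $x<d^*$ and $j\in L_x$ for every $x>d^*$.

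It then remains to check, for every $x\ne d^*$, the identities $\sum_{i\in L_x}h_i(x)=1$ and $\sum_{i\in L_x^C}h_i(x)=0$ required by Corollary \ref{cor:characterization_no_robust}. For $x<d^*$ the only index $i$ with $h_i(x)\ne0$ is $i=k$, and $k\in L_x$; hence $1=h_k(x)\le\sum_{i\in L_x}h_i(x)\le\sum_{i=1}^nh_i(x)=1$, while $\sum_{i\in L_x^C}h_i(x)=0$ because $k\notin L_x^C$. For $x>d^*$ the same reasoning applies with $j$ in place of $k$. Since $\{x=d^*\}$ is a single point, both identities hold almost everywhere, so by Corollary \ref{cor:characterization_no_robust} the allocation $Y_i^*+\pi_i^*=g_i^*(S)+c_i^*$ is Pareto optimal as soon as $\sum_ic_i^*=0$ and $\{g_i^*(S)+c_i^*\}_{i=1}^n\in\mathcal{IR}$; the existence of such constants was established in the discussion following Theorem \ref{thm:po_drm_characterization} (e.g.\ via \eqref{eq:ir_example}). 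This is precisely the claimed allocation, with agent $k$ absorbing the bottom layer $\min\{S,d^*\}$, agent $j$ the excess $\max\{S-d^*,0\}$, and every other agent nothing.

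I do not expect a genuine obstacle: the argument reduces to the elementary monotonicity of $\alpha\mapsto u^\alpha$ split at $u=1$, together with the translation of the probability threshold $t=e^{-1}$ into the quantile $d^*=\mathrm{VaR}_{e^{-1}}(S)$. The only points requiring a little care are the precise $\mathrm{VaR}$ convention (so that $\Pr(S>x)>e^{-1}$ for $x<d^*$ and $\Pr(S>x)\le e^{-1}$ for $x>d^*$ come out with the correct inequalities), and the bookkeeping when several $\alpha_i$ coincide or when the survival function of $S$ is flat at level $e^{-1}$ — but these situations only enlarge $L_x$ (at the extreme, to all of $\mathcal{N}$), where the two identities above are immediate, so they do not affect the conclusion.
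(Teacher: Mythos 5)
Your proposal is correct and follows essentially the same route as the paper: identify $L_x$ by noting that minimizing $\exp(-(-\ln t)^{\alpha_i})$ amounts to comparing $(-\ln t)^{\alpha_i}$, whose monotonicity in $\alpha_i$ flips at $t=e^{-1}$, then plug the indicator derivatives $h_k=\One_{\{x<d^*\}}$, $h_j=\One_{\{x>d^*\}}$ into the characterization (the paper cites Theorem \ref{thm:po_drm_characterization} directly rather than Corollary \ref{cor:characterization_no_robust}, and uses $\One_{\{x\ge d^*\}}$ for $h_j$, which agrees with yours off a null set). Your added care with the $\mathrm{VaR}$ convention and the boundary cases $t\in\{0,e^{-1},1\}$ is a welcome tightening but not a different argument.
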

\begin{proof}
    We first show that 
    \begin{align*}
        j&\in L_x\,,\mbox{ if }\,\Pr(S>x)\le e^{-1}\,,\\
        k&\in L_x\,,\mbox{ if }\,\Pr(S>x)>e^{-1}\,.
    \end{align*}
    Suppose first that $\Pr(S>x)\le e^{-1}$. Then we have
    \begin{align*}
        \Pr(S>x)&\le e^{-1}\\
        \ln(\Pr(S>x))&\le-1\\
        -\ln(\Pr(S>x))&\ge1\,.
    \end{align*}
    It follows that for all $i\in\mathcal{N}$,
    \begin{align*}
        (-\ln(\Pr(S>x)))^{\alpha_j}&\ge(-\ln(\Pr(S>x)))^{\alpha_i}\\
        \exp(-(-\ln(\Pr(S>x)))^{\alpha_j})&\le\exp(-(-\ln(\Pr(S>x)))^{\alpha_i})\,,
    \end{align*}
    which implies that $j\in L_x$. The second case is similar. Suppose that $\Pr(S>x)>e^{-1}$. Then we have
        \[-\ln(\Pr(S>x))<1\,,\]
    from which it is straightforward to conclude that $k\in L_x$.
    
    Therefore, defining
    \begin{align*}
        h_j(x)&:=\mathbbm{1}_{\{x\ge d^*\}}\,,\\
        h_k(x)&:=\mathbbm{1}_{\{x<d^*\}}\,,\\
        h_i(x)&:=0,\mbox{ for }i\in\mathcal{N}\setminus\{j,k\}\,,
    \end{align*}
    satisfies the conditions of Theorem \ref{thm:po_drm_characterization}. The result follows by taking $g_i^*(x)=\int_0^xh_i(z)\,dz$ for all $i\in\mathcal{N}$.
\end{proof}

Consequently, when all agents use a Prelec-1 distortion function, a possible PO allocation is a translate of the contract where agent $k$ provides full coverage of the aggregate risk up to a limit $d^*$, and the excess aggregate loss beyond the level $d^*$ is covered by agent $j$. By Proposition \ref{prop:pra_prelec}, agent $k$ exhibits the most risk aversion toward tail events. This is reflected by the form of Pareto optima in this case, since agent $k$ does not cover losses beyond the deductible. Instead, the tail risk is assumed by agent $j$, who exhibits the least risk aversion toward tail events.

Prelec \cite{Prelec98} also defines a two-parameter family of distortion functions (Prelec-2 distortion functions), by
    \[T_i(t)=\exp(-\beta_i(-\ln(t))^{\alpha_i})\,,\]
where $\alpha_i\in(0,1)$ and $\beta_i>0$. It can be verified that these distortion functions have an inflection point at $(e^{-1},e^{-\beta_i})$. In this case, we have a result similar to that of Proposition \ref{prop:pra_prelec}.

\begin{proposition}
\label{prop:pra_prelec_2}
    If each $\rho_i$ is a distortion risk measure given by a Prelec-2 distortion function with parameters $\alpha_i,\beta_i$, then we have the following:
    \begin{enumerate}[label=(\roman*)]
        \item For a fixed $\alpha_i$ and for all $t\in(0,1)$, the probabilistic risk aversion index $PRA_i(t)$ is larger for smaller values of the parameter $\beta_i$.
        \medskip
        \item For a fixed $\beta_i$ and for all $t\in(0,\varepsilon)$ where $\varepsilon>0$, the probabilistic risk aversion index $PRA_i(t)$ is larger for smaller values of the parameter $\alpha_i$.
    \end{enumerate}
\end{proposition}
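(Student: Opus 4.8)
The plan is to reduce the whole statement to a single closed-form expression for the Probabilistic Risk Aversion Index of a Prelec-2 distortion and then read off the two monotonicity claims by differentiating in $\beta_i$ and comparing in $\alpha_i$. Setting $u=-\ln t>0$ for $t\in(0,1)$ and using logarithmic differentiation, one gets from $T_i(t)=\exp(-\beta_i(-\ln t)^{\alpha_i})$ that
\[
\ln T_i'(t)=\ln(\alpha_i\beta_i)+(\alpha_i-1)\ln u+u-\beta_i u^{\alpha_i},
\]
so that, since $du/dt=-1/t$,
\[
\frac{T_i''(t)}{T_i'(t)}=\frac{d}{dt}\ln T_i'(t)=-\frac1t\left(1+\frac{\alpha_i-1}{u}-\alpha_i\beta_i u^{\alpha_i-1}\right),
\]
and hence
\[
PRA_i(t)=\frac1t\left(1+\frac{\alpha_i-1}{-\ln t}-\alpha_i\beta_i\,(-\ln t)^{\alpha_i-1}\right).
\]
Setting $\beta_i=1$ recovers the Prelec-1 expression used in the proof of Proposition \ref{prop:pra_prelec}, which is a convenient sanity check.

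For part (i), hold $\alpha_i$ fixed and differentiate the closed form in $\beta_i$:
\[
\frac{\partial}{\partial\beta_i}PRA_i(t)=-\frac{\alpha_i\,(-\ln t)^{\alpha_i-1}}{t}.
\]
Since $t>0$, $\alpha_i>0$, and $-\ln t>0$ for every $t\in(0,1)$, this is strictly negative on all of $(0,1)$, so $PRA_i(t)$ is strictly decreasing in $\beta_i$; equivalently, it is larger for smaller $\beta_i$. This part is immediate.

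For part (ii) I would avoid differentiating in $\alpha_i$ (which would require controlling the threshold $\varepsilon$ uniformly over the $\alpha$-range) and instead compare two parameter values directly. Fix $\beta>0$ and $0<\alpha<\alpha'<1$; writing $u=-\ln t$, the closed form gives that the difference between the index at parameter $\alpha$ and at parameter $\alpha'$ equals
\[
\frac1t\left(\frac{\alpha-\alpha'}{u}+\beta\bigl(\alpha' u^{\alpha'-1}-\alpha u^{\alpha-1}\bigr)\right),
\]
so it suffices to show the bracketed term is positive for all sufficiently small $t$, i.e.\ for all sufficiently large $u$. Multiplying the bracket by $u^{1-\alpha'}>0$ yields $(\alpha-\alpha')u^{-\alpha'}+\beta\alpha'-\beta\alpha\,u^{\alpha-\alpha'}$, which tends to $\beta\alpha'>0$ as $u\to\infty$ because $\alpha'>0$ forces $u^{-\alpha'}\to0$ and $\alpha-\alpha'<0$ forces $u^{\alpha-\alpha'}\to0$. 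Hence there is $U>0$ with the bracket positive for all $u>U$, i.e.\ for all $t\in(0,e^{-U})$, and since $1/t>0$ this gives $PRA_i(t)|_{\alpha}>PRA_i(t)|_{\alpha'}$ on $(0,\varepsilon)$ with $\varepsilon:=e^{-U}$, which is the claim. The one delicate step is precisely this limit: one must observe that among the three summands $u^{-\alpha'}$, the constant $\beta\alpha'$, and $u^{\alpha-\alpha'}$, the surviving term in the limit is the positive constant, which is exactly where the hypotheses $\alpha,\alpha'\in(0,1)$ and $\alpha<\alpha'$ are used; everything else is routine bookkeeping.
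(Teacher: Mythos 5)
Your closed form for the index agrees with the paper's: writing $u=-\ln t$, your expression $\frac1t\bigl(1+\frac{\alpha_i-1}{u}-\alpha_i\beta_i u^{\alpha_i-1}\bigr)$ is exactly the paper's $\frac{\ln t+\alpha_i\beta_i(-\ln t)^{\alpha_i}-\alpha_i+1}{t\ln t}$, and your part (i) is the same computation the paper does ($\partial_{\beta_i}PRA_i(t)=\alpha_i(-\ln t)^{\alpha_i}/(t\ln t)<0$). The proof is correct.

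For part (ii) you take a genuinely different route. The paper differentiates $PRA_i$ in $\alpha_i$ and shows the derivative is negative on an explicit interval $t\in(0,\varepsilon)$ with $\varepsilon$ chosen so that $-\ln t>1$ and $\beta_i(-\ln t)^{\alpha_i}>1$; you instead compare the two parameter values $\alpha<\alpha'$ directly and show the difference
\[
\frac1t\left(\frac{\alpha-\alpha'}{u}+\beta\bigl(\alpha' u^{\alpha'-1}-\alpha u^{\alpha-1}\bigr)\right)
\]
is eventually positive as $u\to\infty$ because, after rescaling by $u^{1-\alpha'}$, the only surviving term is the constant $\beta\alpha'>0$. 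Your concern about the derivative approach (that the threshold must be controlled uniformly over the intermediate $\alpha$-values before the mean value theorem applies) is legitimate, though it is repairable in the paper's argument since its $\varepsilon$ depends continuously on $(\alpha_i,\beta_i)$ and is bounded away from zero on the compact segment $[\alpha,\alpha']$. What the paper's approach buys is an explicit, computable $\varepsilon$ (essentially $\min\{e^{-1},\exp(-(1/\beta)^{1/\alpha})\}$); what yours buys is a logically cleaner pairwise comparison at the cost of a non-constructive threshold $U$. Both establish the stated claim, which only asserts existence of some $\varepsilon>0$.
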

\begin{proof}
    Similarly to the calculation in the proof of Proposition \ref{prop:pra_prelec}, we have
    \begin{align*}
        PRA_i(t)&=\frac{\ln(t)+\alpha_i\beta_i(-\ln(t))^{\alpha_i}-\alpha_i+1}{t\ln(t)}\,,\\
        \frac{d}{d\alpha_i}\,PRA_i(t)&=\frac{\beta_i(-\ln(t))^{\alpha_i}+\alpha_i\beta_i\cdot\ln(-\ln(t))\cdot(-\ln(t))^{\alpha_i}-1}{t\ln(t)}\,,\\
        \frac{d}{d\beta_i}\,PRA_i(t)&=\frac{\alpha_i(-\ln(t))^{\alpha_i}}{t\ln(t)}\,.
    \end{align*}
    We see that $\frac{d}{d\beta_i}\,PRA_i(t)<0$ for all $t\in(0,1)$, which shows (i). For (ii), if we take
    \[\varepsilon=\min\l\{e^{-1},\exp\l(-\l(\frac{1}{\alpha_i}\r)^{\l(\frac{1}{\beta_i}\r)}\r)\r\}\,,\]
    then for $t\in(0,\varepsilon)$, we have
    \begin{align*}
        -\ln(t)&>1\,,\\
        \beta_i(-\ln(t))^{\alpha_i}&>1\,.
    \end{align*}
    Therefore
    \begin{align*}
        \beta_i(-\ln(t))^{\alpha_i}+\alpha_i\beta_i\cdot\ln(-\ln(t))\cdot(-\ln(t))^{\alpha_i}-1&>1+0-1
        =0.
    \end{align*}
    Since $t\ln(t)<0$, this implies
    \[\frac{d}{d\alpha_i}\,PRA_i(t)<0\,,\]
    which yields the desired result.
\end{proof}

Hence, if there exists an agent $k$ such that $\alpha_k=\min_{i\in\mathcal{N}}\alpha_i$ and $\beta_k=\min_{i\in\mathcal{N}}\beta_i$, then this agent exhibits the most risk aversion to tail events. A similar result to that of Proposition \ref{prop:prelec} is given below.

\begin{proposition}
\label{prop:prelec2}
    Suppose each $\rho_i$ is distortion risk measure given by a Prelec-2 distortion function with parameters $\alpha_i$ and $\beta_i$.
    \begin{enumerate}[label=\roman*)]
        \item Suppose that  $\beta_1=\beta_2=\ldots=\beta_n:=\beta$. Let $\alpha_j=\max_{i\in\mathcal{N}}\alpha_i$ and $\alpha_k=\min_{i\in\mathcal{N}}\alpha_i$. Then a Pareto-optimal allocation $(\{Y_i^*\}_{i=1}^n,\{\pi_i^*\}_{i=1}^n)$ is given by
        \[Y_i^*+\pi_i^*=g_i^*(S)+c_i^*\,,\]
        where $\{c_i\}_{i=1}^n\in\R^n$ is chosen so that the allocation is $\mathcal{IR}$, and
        \begin{align*}
            g_k^*(x)&=\min\{x,d^*\}\,,\\
            g_j^*(x)&=\max\{x-d^*,0\}\,,\\
            g_i^*(x)&=0,\mbox{ for }i\in\mathcal{N}\setminus\{j,k\}\,.
        \end{align*}
        where $d^*=\mathrm{VaR}_{e^{-1}}(S)$.

        \medskip
   
        \item Suppose that $\alpha_1=\alpha_2=\ldots=\alpha_n:=\alpha$. Let $\beta_j=\max_{i\in\mathcal{N}}\beta_i$. Then a Pareto-optimal allocation $(\{Y_i^*\}_{i=1}^n,\{\pi_i^*\}_{i=1}^n)$ is given by
        \[Y_i^*+\pi_i^*=g_i^*(S)+c_i^*\,,\]
        where $\{c_i\}_{i=1}^n\in\R^n$ is chosen so that the allocation is $\mathcal{IR}$, and
        \begin{align*}
            g_j^*(x)&=x\,,\\
            g_i^*(x)&=0,\mbox{ for }i\in\mathcal{N}\setminus\{j\}\,.
        \end{align*}
    \end{enumerate}
\end{proposition}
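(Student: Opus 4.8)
The plan is to apply Corollary~\ref{cor:characterization_no_robust}. Each $\rho_i$ here is a (non-robust) distortion risk measure with a twice-differentiable Prelec-2 distortion function $T_i(t)=\exp(-\beta_i(-\ln t)^{\alpha_i})$, so it suffices to identify the index sets $L_x$ and to exhibit measurable functions $h_i\colon\R_+\to[0,1]$ with $\sum_{i\in L_x}h_i(x)=1$ and $\sum_{i\in L_x^C}h_i(x)=0$ for a.e.\ $x$, such that $g_i^*(x)=\int_0^xh_i(z)\,dz$ recovers the claimed allocation. The argument then runs parallel to the proof of Proposition~\ref{prop:prelec}, the only new input being the way $T_i(\Pr(S>x))$ depends on whichever parameter is being varied.

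For part (i), with $\beta_i\equiv\beta$, put $t=\Pr(S>x)$ and $u=-\ln t\ge 0$, so that $T_i(t)=\exp(-\beta u^{\alpha_i})$. Since $v\mapsto\exp(-\beta v)$ is decreasing, the ordering of $\{T_i(t)\}_i$ is the reverse of the ordering of $\{u^{\alpha_i}\}_i$. When $u\ge 1$, i.e.\ $t\le e^{-1}$, the map $\alpha\mapsto u^{\alpha}$ is non-decreasing, so $T_i(t)$ is non-increasing in $\alpha_i$ and hence $j\in L_x$; when $u<1$, i.e.\ $t>e^{-1}$, the map $\alpha\mapsto u^{\alpha}$ is non-increasing, so $T_i(t)$ is non-decreasing in $\alpha_i$ and hence $k\in L_x$. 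Since $d^*=\mathrm{VaR}_{e^{-1}}(S)$ satisfies $\Pr(S>x)\le e^{-1}$ exactly for $x\ge d^*$ (by right-continuity of the survival function), we obtain $j\in L_x$ on $\{x\ge d^*\}$ and $k\in L_x$ on $\{x<d^*\}$. Taking $h_j(x)=\mathbbm{1}_{\{x\ge d^*\}}$, $h_k(x)=\mathbbm{1}_{\{x<d^*\}}$, and $h_i\equiv 0$ for $i\notin\{j,k\}$ satisfies the constraints of Corollary~\ref{cor:characterization_no_robust}, and integrating gives $g_k^*(x)=\min\{x,d^*\}$, $g_j^*(x)=\max\{x-d^*,0\}$, and $g_i^*\equiv 0$ otherwise.

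For part (ii), with $\alpha_i\equiv\alpha$, we have $T_i(t)=\exp(-\beta_i(-\ln t)^{\alpha})$. For every $t\in(0,1)$ the quantity $(-\ln t)^{\alpha}$ is strictly positive, so $\beta\mapsto\exp(-\beta(-\ln t)^{\alpha})$ is strictly decreasing, while at $t\in\{0,1\}$ all the $T_i(t)$ coincide. Hence $T_j(t)=\min_i T_i(t)$ for every $t\in[0,1]$, i.e.\ $j\in L_x$ for every $x\in\R_+$. Taking $h_j\equiv 1$ and $h_i\equiv 0$ for $i\ne j$ then gives $g_j^*(x)=x$ and $g_i^*\equiv 0$ otherwise. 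In both parts, the constants $c_i^*$ are chosen as in Theorem~\ref{thm:po_drm_characterization} (e.g.\ via~\eqref{eq:ir_example}) so that the allocation is individually rational.

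I do not expect a genuine obstacle: the content lies entirely in the two elementary monotonicity observations above, which are variants of the computations already carried out in Propositions~\ref{prop:pra_prelec} and~\ref{prop:pra_prelec_2}. The one point requiring mild care is the treatment of parameter ties and of the level sets $\{\Pr(S>x)=e^{-1}\}$ in (i) and $\{\Pr(S>x)\in\{0,1\}\}$ in (ii): there several distortions coincide and $L_x$ is larger, possibly all of $\mathcal N$, but this is harmless since we only need to exhibit \emph{one} Pareto-optimal allocation — it suffices that the designated index ($j$ or $k$) belongs to $L_x$ and that every agent carrying nonzero weight does too, which the chosen $h_i$ clearly satisfy.
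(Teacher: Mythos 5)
Your proposal is correct and follows essentially the same route as the paper: identify $L_x$ via the elementary monotonicity of $u^{\alpha}$ in $\alpha$ on $\{u\ge 1\}$ versus $\{u<1\}$ for part (i) and of $\beta\mapsto e^{-\beta v}$ for part (ii), then plug the indicator weights $h_i$ into the characterization theorem. Your explicit handling of ties and of the level set $\{\Pr(S>x)=e^{-1}\}$, and your appeal to Corollary~\ref{cor:characterization_no_robust} (which supplies sufficiency) rather than only the necessary condition of Theorem~\ref{thm:po_drm_characterization}, are slightly more careful than the paper's own write-up but do not change the argument.
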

\begin{proof}
    Similarly to the proof of Proposition \ref{prop:prelec}, we show that
    \begin{align*}
        j&\in L_x\,,\mbox{ if }\,\Pr(S>x)\le e^{-1}\,,\\
        k&\in L_x\,,\mbox{ if }\,\Pr(S>x)>e^{-1}\,.
    \end{align*}
    Suppose first that $\Pr(S>x)\le e^{-1}$. Then we have
    \begin{align*}
        \Pr(S>x)&\le e^{-1}\\
        \ln(\Pr(S>x))&\le-1\\
        -\ln(\Pr(S>x))&\ge1\,.
    \end{align*}
    It follows that for all $i\in\mathcal{N}$,
    \begin{align*}
        (-\ln(\Pr(S>x)))^{\alpha_j}&\ge(-\ln(\Pr(S>x)))^{\alpha_i}\\
        \beta(-\ln(\Pr(S>x)))^{\alpha_j}&\ge\beta(-\ln(\Pr(S>x)))^{\alpha_i}\\
        \exp(-(-\ln(\Pr(S>x)))^{\alpha_j})&\le\exp(-(-\ln(\Pr(S>x)))^{\alpha_i})\,,
    \end{align*}
    which implies that $j\in L_x$. The second case is similar. Suppose that $\Pr(S>x)>e^{-1}$. Then we have
        \[-\ln(\Pr(S>x))<1\,,\]
    from which it is straightforward to conclude that $k\in L_x$. Applying Theorem \ref{thm:po_drm_characterization} gives (i).

    For (ii), note that if $\beta_j\ge\beta_i$, then
        \[\exp(-\beta_j(-\ln(t))^{\alpha})\le\exp(-\beta_i(-\ln(t))^{\alpha})\,.\]
    Therefore $j\in L_x$ for all $x\in\R_+$. The result follows by applying Theorem \ref{thm:po_drm_characterization}.
\end{proof}

Part (i) of Proposition \ref{prop:prelec2} is a similar result to that of Proposition \ref{prop:prelec} in the case where every agent has the same second parameter. That is, the agent with the largest aversion to tail events provides coverage up to a deductible. Part (ii) states that if every agent uses the same first parameter $\alpha$, then the agent with the least aversion to tail events provides full insurance.

Similar results can be obtained for the family of S-shaped distortion functions first introduced by Kahneman and Tversky \cite{ptKT79,cptKT92}:
\[T_i(t)=\frac{t^{\gamma_i}}{\l(t^{\gamma_i}+(1-t)^{\gamma_i}\r)^{1/\gamma_i}}\,,\]
where $\gamma_i\in(0.279,1]$. Let $\gamma_j=\max_{i\in\mathcal{N}}\gamma_i$ and $\gamma_k=\min_{i\in\mathcal{N}}\gamma_i$. 

\bigskip

\begin{remark}
\label{rmk:pra_prelec}
    It can be numerically verified that for any $t\in(0,\varepsilon)$ for a sufficiently small $\varepsilon>0$, the probabilistic risk aversion index $PRA_i(t)$ is larger for smaller values of the parameter $\gamma_i$. The probabilistic risk aversion index is plotted in Figure \ref{fig:kt_pra}. In particular,
    \[PRA_k(t)\ge PRA_i(t)\ge PRA_j(t)\,.\]
    Again, since small values of $t$ correspond to tail events, we see that agent $k$ exhibits the most aversion to the risk of tail events, whereas agent $j$ exhibits the least aversion.
\end{remark}

It can be shown using numerical methods that there exists $t^*\in(0,1)$ for which
\begin{align*}
    j&\in L_x\,,\mbox{ if }\,\Pr(S>x)\le t^*\,,\\
    k&\in L_x\,,\mbox{ if }\,\Pr(S>x)>t^*\,.
\end{align*}

Distortion functions for different values of the parameter $\gamma$ are shown in Figure \ref{fig:kt_functions}. It is clear from the figure that the minimum of the distortion functions is obtained by either the agent with the largest parameter (agent $j$), or by the agent with the smallest parameter (agent $k$). Thus, by the same arguments as those of Proposition \ref{prop:prelec}, it follows that a PO allocation of a similar form exists. That is, consider the contract such that agent $k$ provides full coverage of the aggregate risk up to a limit $d^*$, and the excess aggregate loss beyond the level $d^*$ is covered by agent $j$. Then a suitable translate of this contract is PO.

\begin{figure}[hbtp!]
	\centering
	\begin{subfigure}[t]{0.5\textwidth}
    \centering
    \includegraphics[height=8cm]{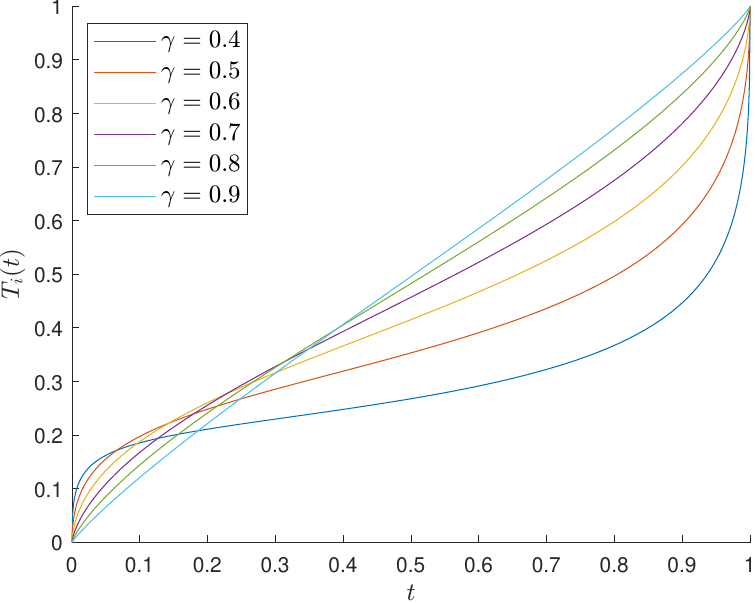}
	\caption{Distortion Functions}
    \label{fig:kt_functions}
	\end{subfigure}
	~
    \begin{subfigure}[t]{0.5\textwidth}
	\centering
	\includegraphics[height=8cm]{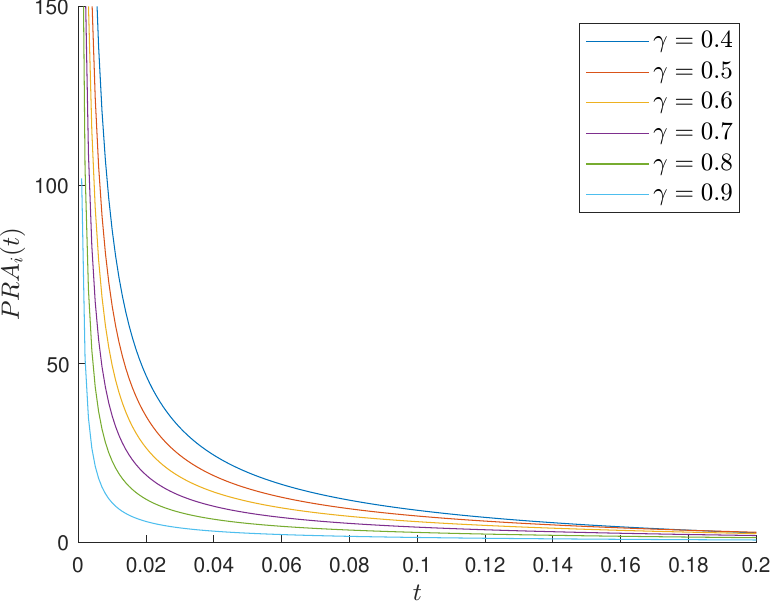}
    \caption{Probability Risk Aversion Index.}
	\label{fig:kt_pra}
    \end{subfigure}
    \caption{Kahneman-Tversky Inverse S-Shaped Distortion Functions}
\end{figure}


\section{Centralized Versus Decentralized Insurance for Flood Risk}
\label{sec:flood_risk}

As an application of our explicit characterization of Pareto-optimal risk-sharing contracts, we reexamine the problem of flood risk insurance in the United States. This setting has been examined in detail by Boonen et al.\ \cite{BCG2024JRI} and Ghossoub and Zhu \cite{GhossoubZhu2024stackelberg}, who consider a market where each State may insure their flood risk with the federal government. We refer to this structure as a \emph{centralized insurance market}, namely, one where insurance is only provided by a limited set of entities in the market. 

In contrast, the main result of the present paper characterizes Pareto optima in a \emph{decentralized insurance market} or \emph{peer-to-peer insurance market}. This market structure allows for agents to ensure each other's risk through participation in a risk sharing pool, which is not possible in a centralized insurance market. In the context of flood risk insurance, a decentralized insurance market is one where individual agents (communities, municipalities, or State-level agencies) agree to combine their flood risk exposures and each cover a portion of the aggregate risk exposure. For the purposes of this section, we assume that the individual agents are States that wish to pool State-level flood risk. This is mainly for ease of comparison with the centralized setting. However, the conclusions of our analysis are also applicable to agents at a higher level of granularity (e.g., individual communities, etc.). Feng et al.\ \cite{feng2023peer} also applied their peer-to-peer risk sharing results to the flood risk insurance in the United States, but they considered a market where only quota-share risk sharing rules (i.e., proportional insurance) are available.

\subsection{A Centralized Insurance Market}\label{numerical_centralized}

First, we revisit the scenario of a centralized insurance market, in which a single monopolistic insurer is the sole provider of insurance. In the context of flood risk in the United States, this insurer is the National Flood Insurance Program (NFIP)\footnote{A comprehensive overview of the NFIP can be found on the website of the Congressional Research Service: \url{https://sgp.fas.org/crs/homesec/R44593.pdf}.} overseen by the Federal Emergency Management Agency (FEMA). Under the NFIP, the federal government of the United States is ultimately responsible for financial gains and losses. 

A comprehensive numerical study of flood risk insurance in this centralized market is performed by Boonen et al.\ \cite{BCG2024JRI}, using a public dataset of claims statistics dating back to the year 1978. They find that by aggregating the flood risk across geographically distant regions, a centralized market is able to realize a significant welfare gain. In the following, we provide a similar but updated analysis of a centralized market. Firstly, the public NFIP dataset has recently been updated to include the amount of financial loss for each flood event, instead of only including the claim amount. As a result, we no longer need to use the claim amount as a proxy for the loss amount. Additionally, Ghossoub and Zhu \cite{GhossoubZhu2024stackelberg} are able to characterize Pareto optima without requiring that each agent uses a coherent risk measure. As a result, we are able to identify Pareto optimal contracts under more general conditions. Finally, in our example, the risk measure of the central agent will be given by the Expected Shortfall, which is the most popular coherent risk measure in practice. We recall the following standard definitions:

\begin{definition}
    The \emph{Value-at-Risk (VaR)} at level $\alpha\in(0,1)$ of a random variable $Z\in\mathcal{X}$ under the probability measure $\Pr$ is
    \[\mathrm{VaR}_\alpha^\Pr(Z):=\inf_{t\in\R}\l\{\Pr(Z>t)\le\alpha\r\}\,.\]
\end{definition}

\begin{definition}
    The \emph{Expected Shortfall (ES)} at level $\alpha\in(0,1)$ of a random variable $Z\in\mathcal{X}$ under the probability measure $\Pr$ is
    \[\mathrm{ES}_\alpha^\Pr(Z):=\frac{1}{\alpha}\int_0^\alpha\mathrm{VaR}_u^\Pr(Z)\,du\,.\]
\end{definition}

We will assume throughout that the distribution of the financial losses arising from floods in a given month adheres to the historical data. Specifically, we assume a discrete probability space $(\Omega,\mathcal{F},\Pr)$, where $\Omega$ has $553$ states, corresponding to each of the months from January 1978 to January 2024. The measure $\Pr$ assigns an equal probability to each of the states in $\Omega$. The risk measures used by each agent are distortion risk measures on this probability space. That is, for all $Z\in\mathcal{X}$ and $i\in\{1,\ldots,n\}$,
\[\rho_i(Z)=\int Z\,dT_i\circ\Pr\,,\]
where each $T_i$ is a distortion function. Let $\nu_i := T_i\circ\Pr$.

In the context of centralized insurance markets with $n$ agents and a central insurer, an insurance contract is a pair $\left(\{I_i\}_{i=1}^n,\{\pi_i\}_{i=1}^n\right)\in\mathcal{I}^n\times\R^n$, where $\mathcal{I}$ denotes an admissible set of indemnity functions (usually monotone and 1-Lipschitz). Here, $I_i$ represents the amount of coverage (i.e.\ the indemnity) that agent $i$ receives from the central insurer, in exchange for a premium of the amount $\pi_i$. In other words, agent $i$ cedes the risk $I_i(X_i)$ to the central insurer. We denote the retained risk of agent $i$ by $R_i(X_i):=X_i-I_i(X_i)$.

The concepts of individual rationality and Pareto optimality for centralized markets are defined similarly to IR and PO in decentralized markets (see Definition \ref{defn:ir_po}). In particular, IR allocations for centralized markets also incentivises the central insurer to participate. A characterization of Pareto optimal contracts in a centralized market is provided by the following result.

\begin{proposition}[Pareto Optima in a Centralized Market]
\label{prop:po_centralized}
Suppose that all agents use distortion risk measures, and that the insurer's risk measure is given by an Expected Shortfall at level $\alpha\in(0,1)$. Then the contract $\left(\{I_i^*\}_{i=1}^n,\{\pi_i^*\}_{i=1}^n\right)$ is Pareto-optimal if and only if:
\medskip
\begin{enumerate}
    \item There exist a probability measure $\Q^*$ and $[0,1]$-valued measurable functions $h_i$ such that for almost all $t\in(0,M)$,
            \begin{equation}
                \label{mif_form_v2}
                (I_i^*)'(t)=\begin{cases}
                        1,&\mbox{ if}\q\Q^*(X_i>t)<\nu_i(X_i>t)\,,\\
                        h_i(t),&\mbox{ if}\q\Q^*(X_i>t)=\nu_i(X_i>t)\,,\\
                        0,&\mbox{ if}\q\Q^*(X_i>t)>\nu_i(X_i>t)\,,\\
                    \end{cases}
            \end{equation}
        and $\Q^*$ solves
            \begin{equation}
    \label{measure_max_problem_v2}   \max_{\Q\in\mathcal{Q}} \, \sum_{i=1}^n\int_0^M\min\l\{\Q(X_i>t),\,\nu_i(X_i>t)\r\}\,dt\,,
            \end{equation}
        where
            \[
                \mathcal{Q}:=\l\{\Q\mbox{ a probability measure}: \Q<<\Pr,\,\frac{d\Q}{d\Pr}\le\frac{1}{\alpha}\r\}\,.
            \]

            \medskip
            
    \item $\{\pi_i^*\}_{i=1}^n$ are chosen so that $\left(\{I_i^*\}_{i=1}^n,\{\pi_i^*\}_{i=1}^n\right)$ is individually rational.
    \end{enumerate}
\end{proposition}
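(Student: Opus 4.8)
The plan is to reduce the centralized problem to the decentralized one already analyzed. Since all risk measures are translation invariant (distortion risk measures and Expected Shortfall are), Proposition~\ref{prop:characterization} applies: a contract is Pareto optimal if and only if it minimizes the sum $\sum_{i=1}^n\rho_i(R_i(X_i))+\rho_0\big(\sum_{i=1}^n I_i(X_i)\big)$ over all admissible contracts, where $\rho_0=\mathrm{ES}_\alpha^{\Pr}$ is the insurer's risk measure and $R_i=X_i-I_i$. Because the insurer's risk is a sum of terms each depending only on $X_i$, and because ES admits the dual (robust) representation $\mathrm{ES}_\alpha^{\Pr}(W)=\sup_{\Q\in\mathcal{Q}}\int W\,d\Q$ with $\mathcal{Q}$ as in the statement, the insurer can effectively be viewed as a ``robust'' agent. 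The first step, then, is to write out this dual representation, substitute it into the sum-minimization problem, and interchange the infimum over indemnities with the supremum over $\Q\in\mathcal{Q}$ via a minimax theorem — $\mathcal{Q}$ is convex and weak-$*$ compact (bounded subset of $L^\infty$ Radon–Nikodym densities), the admissible indemnity set is convex and compact for the relevant topology (1-Lipschitz monotone functions via Arzelà–Ascoli, exactly as in the proof of Theorem~\ref{thm:po_drm_characterization}), and the objective is bilinear and continuous. This is the structural heart of the argument and closely parallels the minimax step in Theorem~\ref{thm:po_drm_characterization}.

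The second step is to solve the inner minimization over indemnities for a fixed $\Q\in\mathcal{Q}$. Writing each $\rho_i(R_i(X_i))=\int R_i(X_i)\,d\nu_i$ as a Choquet integral and $\int(\sum_i I_i(X_i))\,d\Q=\sum_i\int I_i(X_i)\,d\Q$, the layer-cake (Fubini) representation turns the objective into $\sum_i\int_0^M\big[\nu_i(X_i>t)(1-(I_i^*)'(t))+\Q(X_i>t)(I_i^*)'(t)\big]\,dt$ plus a constant, where $(I_i)'\in[0,1]$ ranges freely and independently across $i$ and $t$ (this is where the $1$-Lipschitz/no-sabotage constraint is used). Pointwise minimization of the integrand in $(I_i)'(t)\in[0,1]$ gives the bang-bang rule \eqref{mif_form_v2}: put $(I_i^*)'(t)=1$ where $\Q(X_i>t)<\nu_i(X_i>t)$, $(I_i^*)'(t)=0$ where the reverse strict inequality holds, and any $h_i(t)\in[0,1]$ on the tie set; the resulting minimal value of the inner problem is $\sum_i\int_0^M\min\{\Q(X_i>t),\nu_i(X_i>t)\}\,dt$ plus the constant. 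Hence the outer supremum over $\Q$ is exactly \eqref{measure_max_problem_v2}, and a maximizer $\Q^*$ exists by weak-$*$ compactness of $\mathcal{Q}$ together with continuity of $\Q\mapsto\int_0^M\min\{\Q(X_i>t),\nu_i(X_i>t)\}\,dt$.

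The third step assembles necessity and sufficiency. For sufficiency: given $\Q^*$ solving \eqref{measure_max_problem_v2} and $I_i^*$ satisfying \eqref{mif_form_v2}, the computation above shows $\sum_i\rho_i(R_i^*(X_i))+\rho_0(\sum_i I_i^*(X_i))$ equals the minimax value, so by the minimax equality this contract attains the infimum of the sum, hence is Pareto optimal once $\{\pi_i^*\}$ is chosen to make it individually rational (such a choice exists by the same argument as in \eqref{eq:ir_example}, using that no insurance is IR). For necessity: if $(\{I_i^*\},\{\pi_i^*\})$ is Pareto optimal, then by Proposition~\ref{prop:characterization} it minimizes the sum; standard minimax saddle-point theory (as invoked after \eqref{eq:maximin} in the proof of Theorem~\ref{thm:po_drm_characterization}) forces $\{I_i^*\}$ to be a best response to \emph{every} maximizer $\Q^*$ of the dual problem, and a best response must satisfy the bang-bang condition \eqref{mif_form_v2} off the tie set — an argument identical to the converse direction in Theorem~\ref{thm:po_drm_characterization}, where deviating from the pointwise minimizer on a set of positive measure strictly increases the objective.

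The main obstacle I anticipate is the minimax interchange: one must equip the indemnity space and the dual-measure space with topologies simultaneously making both sets compact and the objective jointly continuous (or at least appropriately semicontinuous and quasi-concave/quasi-convex) so that Sion's theorem applies — the paper handles the analogous point for $\mathcal{G}$ and $\prod\mathcal{T}_i$ in Theorem~\ref{thm:po_drm_characterization}, and essentially the same setup works here with $\mathcal{Q}$ carrying the weak-$*$ topology of $L^\infty$ densities. A minor secondary point is verifying that the ES dual set $\mathcal{Q}=\{\Q\ll\Pr:\ d\Q/d\Pr\le 1/\alpha\}$ indeed gives $\mathrm{ES}_\alpha^{\Pr}$ as its support functional on $\mathcal{X}$, which is classical (e.g., Föllmer–Schied) and can be cited.
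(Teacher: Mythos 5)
Your route is genuinely different from the paper's: the paper's proof of Proposition \ref{prop:po_centralized} is two lines long, citing the dual representation $\mathrm{ES}_\alpha^{\Pr}(Z)=\max\{\E^{\Q}[Z]:\Q\ll\Pr,\ d\Q/d\Pr\le 1/\alpha\}$ and then outsourcing everything else to an external result (Ghossoub--Zhu, Corollary 3.7). You instead give a self-contained derivation that transplants the proof architecture of Theorem \ref{thm:po_drm_characterization} to the centralized market: reduction to sum-minimization via translation invariance, the layer-cake linearization of the Choquet integrals in $I_i'$, a Sion minimax interchange between the indemnity set and the ES dual set $\mathcal{Q}$, and pointwise bang-bang minimization yielding \eqref{mif_form_v2} and the dual problem \eqref{measure_max_problem_v2}. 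This is the right structural picture, and it makes transparent exactly what the cited corollary must be doing; the price is that you take on proof obligations the paper avoids. Two of these are minor and fine as sketched (extending Proposition \ref{prop:characterization} to a market that includes the insurer as an $(n+1)$-st translation-invariant agent; weak-$*$ compactness and convexity of $\mathcal{Q}$ for the minimax step).

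The one substantive gap is in your sufficiency direction. From ``$\Q^*$ solves \eqref{measure_max_problem_v2} and $\{I_i^*\}$ is a pointwise best response to $\Q^*$'' you conclude that $\sum_i\rho_i(R_i^*(X_i))+\mathrm{ES}_\alpha^{\Pr}(\sum_i I_i^*(X_i))$ equals the minimax value. But that quantity is $\sum_i\rho_i(R_i^*(X_i))+\sup_{\Q\in\mathcal{Q}}\E^{\Q}[\sum_i I_i^*(X_i)]$, and the supremum is attained at $\Q^*$ only if $(\{I_i^*\},\Q^*)$ is a genuine saddle point; being a best response to the dual maximizer does not by itself imply this, and the inequality can in principle be strict. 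This is precisely the asymmetry the paper itself respects: Theorem \ref{thm:po_drm_characterization} asserts only a \emph{necessary} condition in the genuinely robust case, and Corollary \ref{cor:characterization_no_robust} upgrades to ``if and only if'' only because the inner and outer problems coincide when each $\mathcal{T}_i$ is a singleton. Here the ES agent is genuinely robust, so your necessity argument (minimizer of the primal $\Rightarrow$ saddle point with every dual maximizer $\Rightarrow$ bang-bang form, by the strict-inequality argument on a set of positive measure) is sound, but sufficiency needs an additional verification --- e.g., that for the constructed $\{I_i^*\}$ one can choose, or already has, a $\Q^*$ maximizing $\E^{\Q}[\sum_i I_i^*(X_i)]$ over $\mathcal{Q}$ --- which in the paper is buried inside the cited external corollary. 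As written, your argument proves necessity and the existence/value statements, but not the ``if'' half of the proposition.
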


\begin{proof}
    It is well-known (e.g., McNeil et al.\ \cite[Theorem 8.14]{mcneil2015quantitative}) that the Expected Shortfall satisfies
        \[
            \mathrm{ES}_\alpha^\Pr(Z)=\max\l\{\E^\Q[Z]:\Q<<\Pr,\,\frac{d\Q}{d\Pr}\le\frac{1}{\alpha}\r\}\,,
        \]
    for all $Z\in\mathcal{X}$. The result then follows from Ghossoub and Zhu \cite[Corollary 3.7]{GhossoubZhu2024stackelberg}.
\end{proof}

We now apply the result of Proposition \ref{prop:po_centralized} to obtain Pareto-optimal contracts in the centralized insurance market. We consider a market with three agents: the States of California ($i=1$), New York ($i=2$), and Texas ($i=3$). The correlation between the monthly losses of these States is given by the matrix
    \[
        \begin{bmatrix}
            1&-0.0094&-0.0109\\
            -0.0094&1&-0.0044\\
            -0.0109&-0.0044&1
        \end{bmatrix}\,,
    \]
which indicates very little correlation. Some summary statistics for the monthly losses for these States are shown in Table \ref{table:summary}, in which all values are dollar amounts. We see that the financial losses are right-skewed, and that it is possible to experience losses far larger than the average.

\begin{table}[htbp!]
    \centering
    \begin{tabular}{cccc}
        \hline
        &California&New York&Texas\\
        \hline
        \hline
        Average&$1.6038\times10^6$&$1.1005\times10^7$&$3.3259\times10^7$\\
        Median&$2.7647\times10^4$&$1.4873\times10^5$&$4.7270\times10^5$\\
        $\mathrm{VaR}_{5\%}$&$4.9501\times10^6$&$7.1506\times10^6$&$4.6885\times10^7$\\
        Maximum&$1.1725\times10^8$&$4.3537\times10^9$&$9.4550\times10^9$\\
        Standard Deviation&$8.6640\times10^6$&$1.8625\times10^8$&$4.2063\times10^8$\\
        \hline
    \end{tabular}
    \caption{Summary Statistics for Monthly Losses Due to Floods.\vspace{0.4cm}}
    \label{table:summary}
\end{table}

We assume that the distortion functions $T_1$ and $T_2$ are both inverse S-shaped distortion functions, as in Kahneman and Tversky \cite{ptKT79,cptKT92}. Specifically, for $i=1,2$, we take
    \[
        T_i(t)=\frac{t^{\gamma_i}}{(t^{\gamma_i}+(1-t)^{\gamma_i})^\frac{1}{\gamma_i}}\,,
    \]
with $\gamma_1=0.4$ and $\gamma_2=0.5$. A parameter value of approximately $0.5$ has recently been estimated by Rieger et al.\ \cite{Riegeretal2017} to be descriptive of actual behaviour. We assume a power distortion for the remaining agent. Specifically,
    \[
        T_3(t)=t^{\gamma_3}\,,
    \]
with $\gamma_3=0.4$. Finally, we assume that the central insurer uses the Expected Shortfall with parameter $\alpha=15\%$. While this parameter may seem large, this is to allow for the central insurer's risk measure to be display enough risk tolerance that it will accept a significant portion of risk from the agents in the market. If this parameter is too low, then the central insurer would be too conservative and avoid taking on any flood risk, leading to a market where no insurance is possible. Indeed, it can be verified that under these parameters, no insurance is provided if $\alpha\approx2.5\%$.

The retained risk of each agent in a Pareto-optimal contract is shown in Figure \ref{fig:centralized1}.
Note that in this figure and subsequent figures, the amount of the premium is not shown, for visual clarity and to facilitate easier comparison between plots. In other words, each plot has been normalized so that an experienced loss of $\$0$ corresponds to a retention of $\$0$ as well. We see that in each State prefers to cede a portion of their tail risk to the central insurer, while choosing to retain some of the smaller financial losses. This suggests that a deductible contract is best in this situation, with losses beyond the deductible fully covered by the insurer. The value of this deductible is approximately $1.9047\times10^7$ for California, $4.0645\times10^7$ for New York, and $5.1287\times10^7$ for Texas.
 
\begin{figure}[htbp!]
	\centering
	\includegraphics[width=\textwidth]{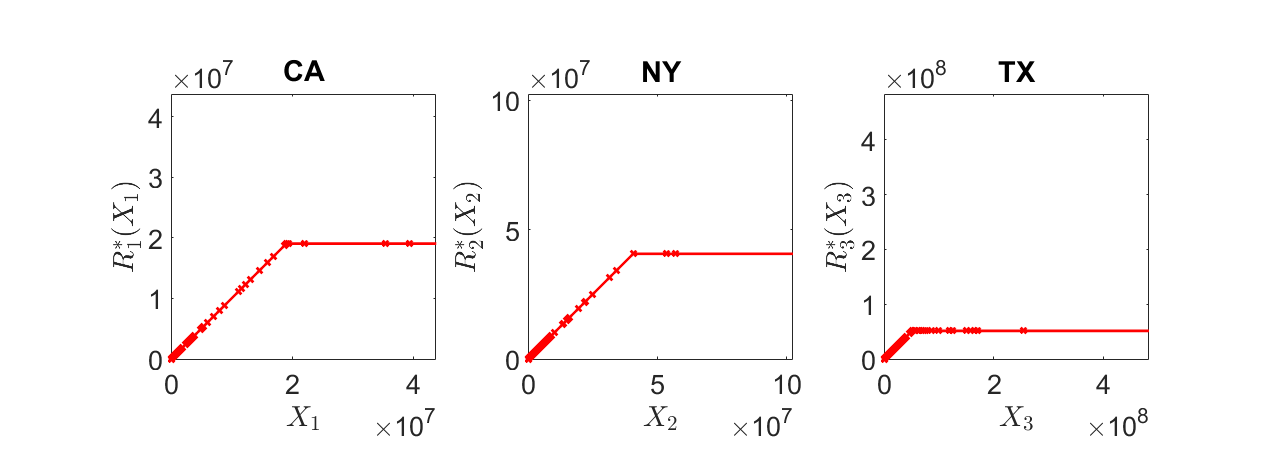}
	\caption{Centralized Pareto-Optimal Retention, CA/NY/TX
        }
    \label{fig:centralized1}
\end{figure}

As a measure of the effectiveness of this contract, we calculate the aggregate welfare gain that results from this arrangement. The welfare gain for each agent $i$ is the difference between their evaluation of the initial risk and their evaluation of the post-contract position:
    \[
        \rho_i(X_i)-\rho_i(R_i(X_i)+\pi_i)\,.
    \]
On the other hand, the welfare gain for the central insurer is the sum of premia received less their evaluation of their assumed risk:
    \[
        \sum_{i=1}^n\pi_i-\mathrm{ES}_\alpha^\Pr\left(\sum_{i=1}^nI_i(X_i)\right)\,.
    \]
The aggregate welfare gain is defined to be the sum of the welfare gain of each agent in the market, including the central insurer. Under our assumptions, the aggregate welfare gain is maximized in a Pareto-optimal allocation (see, e.g., Boonen et al.\ \cite[Theorem 2.1]{BCG2024JRI}). In the example above, the maximum aggregate welfare gain achievable through centralized insurance is $7.9785\times10^8$. The average gain per agent is in this example is $1.9946\times10^8$. Note that including the centralized insurer, there are four agents in the  market in this example.
    
However, while a significant welfare gain is possible in a centralized market, Ghossoub and Zhu \cite{GhossoubZhu2024stackelberg} warn that a monopolistic insurer with a first-mover's advantage can absorb the entirety of the welfare gain by increasing premium prices. 
In particular, the central insurer has an incentive to increase prices and ultimately leave each policyholder with no welfare gain at an equilibrium. The amount of premium paid by each State in a Stackelberg equilibrium in our example is shown in Table \ref{table:premia_stackelberg}. These amounts are extremely high, even exceeding the Value-at-Risk at the $5\%$ level. In this case, the entire aggregate welfare gain of $7.9785\times10^8$ is allocated only to the central insurer.

\begin{table}[htbp!]
    \centering
    \begin{tabular}{cc}
        \hline
        California&$9.1543\times10^6$\\
        New York&$1.7795\times10^8$\\
        Texas&$8.6323\times10^8$\\
        \hline
    \end{tabular}
    \caption{Premium Paid in a Stackelberg Equilibrium, CA/NY/TX.\vspace{0.4cm}}
    \label{table:premia_stackelberg}
\end{table}

\subsection{Peer-to-Peer Risk Sharing}\label{numerical_p2p}

As a partial solution to the concerns arising from the Stackelberg setting, the present paper proposes a peer-to-peer risk sharing scheme among the agents themselves without the presence of a central agency. In the context of flood risk, this is the scenario where the individual State governments of California, New York, and Texas choose to pool their flood risk and each assume a portion of the aggregate financial loss.

Applying the result of Theorem \ref{thm:po_drm_characterization} to the present example yields the Pareto-optimal allocation shown in Figure \ref{fig:p2p1}. The States of California and New York end up with all of the variability of the risk, since the risk measure of Texas is more risk averse under our chosen parameters. However, this does not mean that Texas is receiving free insurance, since each State must pay a premium for participation in the risk-sharing pool. Both Figure \ref{fig:p2p1} and Figure \ref{fig:centralized1} show that Texas prefers retaining less risk than the other agents.

\begin{figure}[htbp!]
	\centering	\includegraphics[width=\textwidth]{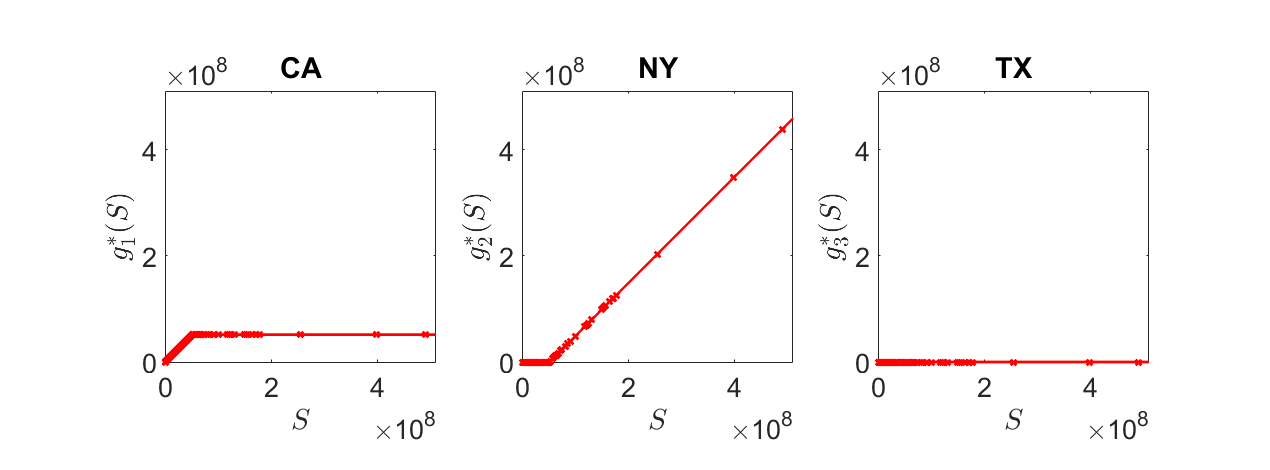}
	\caption{Decentralized Pareto-Optimal Distribution, CA/NY/TX.}
    \label{fig:p2p1}
\end{figure}

\begin{figure}[htbp!]
	\centering
	\includegraphics[width=0.5\textwidth]{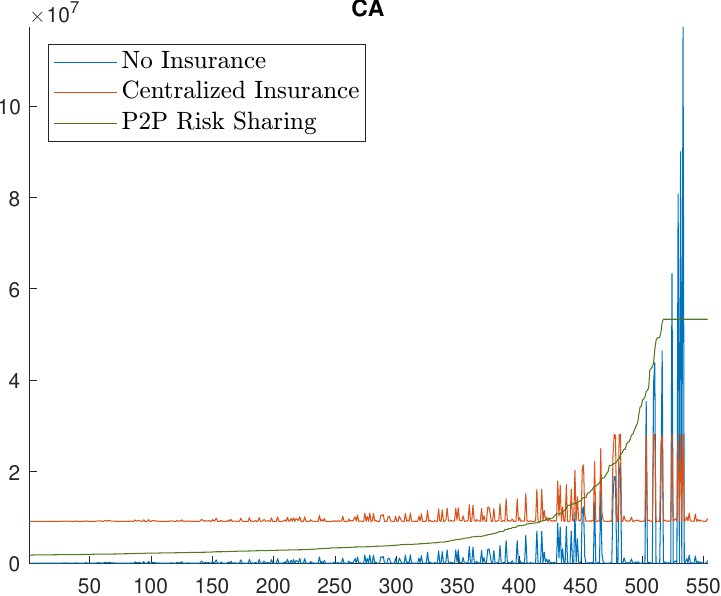}\\
        \medskip
	\includegraphics[width=0.5\textwidth]{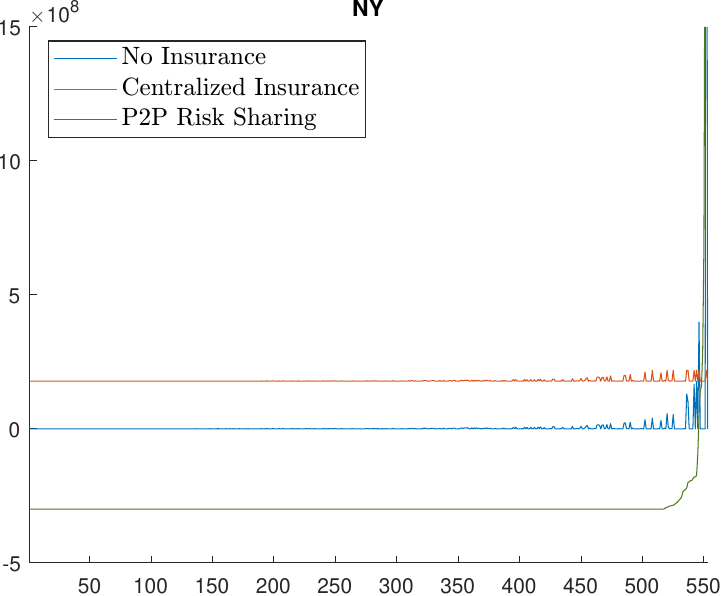}\\
        \medskip
	\includegraphics[width=0.5\textwidth]{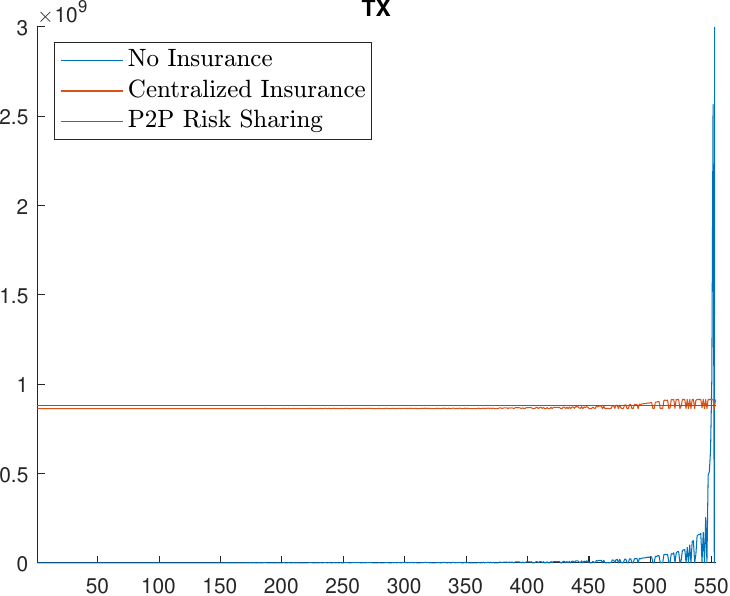}
	\caption{Retained Loss, CA/NY/TX.}
    \label{fig:scheme_comparison}
\end{figure}

A comparison between the centralized and decentralized insurance schemes is shown in Figure \ref{fig:scheme_comparison}, which also includes the scenario of no insurance. The vertical axis represents the retained monthly loss in dollars, and the horizontal axis represents the 553 states of the world $\omega\in\Omega$, which are sorted by the value of $S(\omega)$, the realized aggregate loss in state $\omega \in \Omega$. For ease of comparison, the plots have been normalized so that each agent is indifferent between each retention structure. In the case of centralized insurance, we see that each agent pays a premium to insure their risk beyond a deductible. This results in significantly less variance in retained risk, at a cost of a monthly premium. On the other hand, the structure of the Pareto-optimal risk-sharing arrangement is different for each agent. The plots show that California's retention is similar to that under centralized insurance, in the sense that variation is reduced and extreme losses are insured. However, extreme losses are transferred to New York, who accepts a payment of a premium as compensation. Since there is no exogenous insurer in the decentralized market, all flood risk must be retained among the three agents in the market. The plot in Figure \ref{fig:scheme_comparison} shows that New York is the agent that accepts the extreme tail risk in this example. Finally, Texas elects to pay a premium to transfer all the flood risk to the other two agents, and so its retention structure is a flat horizontal line. This suggests that Texas is the most risk-averse in this scenario.

By Theorem \ref{thm:po_drm_characterization}, both the aggregate welfare gain and the average welfare gain are maximized by Pareto-optimal allocations. Hence, evaluating the effectiveness of the peer-to-peer risk sharing scheme is possible by using the average welfare gain as a metric. Recall that the welfare gain for agent $i$ is defined as the improvement in the risk-sharing scheme over the status quo:
    \[
        \rho_i(X_i)-\rho_i(Y_i+\pi_i)\,.
    \]
The average welfare gain for the peer-to-peer risk sharing scheme is the arithmetic average of the welfare gains of the three agents in the market. In this example, the average welfare gain is $1.9512\times10^8$. While this is a significant gain, this is not quite as large as the average gain of $1.9946\times10^8$ that is possible from a centralized insurance market with the monopolistic insurer. Hence, we see the peer-to-peer risk sharing scheme as a compromise compared to the centralized structure. While a larger welfare gain is ultimately possible in the centralized setting, this carries the risk that the welfare gain is absorbed by only the monopolistic insurer through aggressive pricing schemes; again, see Ghossoub and Zhu \cite{GhossoubZhu2024stackelberg}. On the other hand, the decentralized market admits no such risk, since the welfare gain is necessarily distributed among the agents participating in the risk-sharing pool. However, the trade-off is that the maximum possible average welfare gain is potentially lower in the peer-to-peer market.

In Table \ref{table:varying_power}, we show the average welfare gain per agent in both centralized and decentralized markets for varying values of the parameter $\gamma_3$ in the distortion function of the third agent $T_3$. It can be verified that power distortion functions have a constant relative index of probabilistic risk aversion. As $\gamma_3$ increases, the $RPRA$ of the third agent decreases, and the hence the third agent eventually prefers to retain more of the flood risk in the market. In the extreme case, we see that the peer-to-peer market can actually dominate that of the centralized market in terms of possible welfare gain, since the decentralized scheme allows for the third agent to insure others' risk whereas the centralized scheme does not. Hence, it is possible for the decentralized market's average welfare gain to exceed that of the centralized market. However, in practical settings, we would expect that the policyholders are more risk averse than the central authority, which would lead to an advantage for the centralized insurance market in general.

\begin{table}[htbp!]
    \centering
    \begin{tabular}{cccc}
        \hline
        $RPRA_3$&Centralized Market&Decentralized Market&Percent Decrease\\
        \hline
        \hline
        $0.60$&$1.9946\times10^8$&$1.9512\times10^8$&$2.1750\%$\\
        $0.55$&$1.4460\times10^8$&$1.2112\times10^8$&$16.2383\%$\\
        $0.50$&$1.0415\times10^8$&$6.6055\times10^7$&$36.5774\%$\\        $0.45$&$7.4503\times10^7$&$4.8805\times10^7$&$34.4927\%$\\
        $0.40$&$5.3190\times10^7$&$5.0507\times10^7$&$5.0434\%$\\        $0.35$&$3.8565\times10^7$&$5.2590\times10^7$&$-36.3683\%$\\
        $0.30$&$3.0251\times10^7$&$5.4502\times10^7$&$-80.1628\%$\\
        \hline
    \end{tabular}
    \caption{Average Welfare Gain with Varying Parameter $\gamma_3$.\vspace{0.4cm}}
    \label{table:varying_power}
\end{table}

\section{Conclusion}
\label{sec:conclusion}

In this paper, we obtain a novel characterization of Pareto-optimal allocations in risk sharing markets when agents use robust distortion risk measures. As a special case of our result, we recover the characterization of optima under distortion risk measures, and we provide some general results for risk-sharing using common families of distortion functions. We find that the form of allocations depends primarily on the risk attitude of each agent to tail events, which we quantify through the probabilistic risk aversion index.

As an application of our results, we reexamine the setting of flood risk in the United States from the perspective of a decentralized market. While this market currently operates under a centralized structure with the federal government as the central authority, this comes at a risk associated with Stackelberg equilibria, namely that the central insurer has an incentive to raise prices and eliminate any welfare gain for the policyholders. We therefore argue that the decentralized structure is an alternative that avoids the Stackelberg situation, and we characterize optimal decentralized allocations using historical data. Our results show that there is less welfare gain possible in the decentralized market compared to the centralized one, which we interpret as a necessary compromise.

\newpage
\bibliographystyle{plain}
\bibliography{biblio}
\vspace{0.4cm}

\end{document}